\newcommand{\R}{{\cal R}}
\newcommand{\V}{{\cal V}}
\newcommand{\A}{{\cal A}}
\newcommand{\G}{{\cal G}}
\newcommand{\X}{{\cal X}}
\newcommand{\D}{{\cal D}}
\newcommand{\W}{{\cal W}}
\newcommand{\BB}{{\cal B}}
\newcommand{\E}{{\cal E}}
\newcommand{\T}{{\cal T}}
\renewcommand{\H}{{\cal H}}
\newcommand{\diag}{{\rm diag}}
\newtheorem{theorem}{Theorem}
\newtheorem{assumption}{Assumption}
\newtheorem{lemma}{Lemma}
\newtheorem{proposition}{Proposition}
\newtheorem{definition}{Definition}
\newtheorem{proof}{Proof}
\theoremstyle{break}
\newcommand{\tblcaption}[1]{\def\@captype{table}\caption{#1}}  
\begin{document}
\title{
Payoff-based Inhomogeneous Partially Irrational Play for Potential Game
Theoretic Cooperative Control of Multi-agent Systems}
\author{Tatsuhiko Goto, Takeshi Hatanaka,~\IEEEmembership{Member,~IEEE} and
Masayuki Fujita,~\IEEEmembership{Member,~IEEE} 
\thanks{Tatsuhiko Goto is with Toshiba Corporation,
Takeshi Hatanaka(corresponding author)
and Masayuki Fujita
are with the Department of Mechanical and Control Engineering,
Tokyo Institute of Technology, Tokyo 152-8550,
JAPAN, {\sf hatanaka@ctrl.titech.ac.jp}, {\sf fujita@ctrl.titech.ac.jp}}
}

\maketitle
\begin{abstract}
This paper handles a kind of strategic game called potential games
and develops a novel learning algorithm 
Payoff-based Inhomogeneous Partially Irrational Play (PIPIP).
The present algorithm is based on Distributed Inhomogeneous 
Synchronous Learning (DISL) presented in an existing work but, unlike DISL,
PIPIP allows agents to make irrational decisions with a specified probability, 
i.e. agents can choose an action with a low utility from the past actions 
stored in the memory.
Due to the irrational decisions,
we can prove convergence in probability
of collective actions to potential function maximizers. 
Finally, we demonstrate the effectiveness of the present algorithm
through experiments on a sensor coverage problem.
It is revealed through the demonstration that
the present learning algorithm successfully leads
agents to around potential function maximizers
even in the presence of undesirable Nash equilibria.
We also see through the experiment
with a moving density function that PIPIP has adaptability
to environmental changes.
\end{abstract}
\begin{IEEEkeywords}
\noindent 
potential game,
learning algorithm,
cooperative control,
multi-agent system
\end{IEEEkeywords}
\IEEEpeerreviewmaketitle

\section {Introduction}
\label{sec:1}

Cooperative control of multi-agent systems
basically aims at designing local interactions
of agents in order to meet some global objective of the group
\cite{BCM09,M07}.
It is also required depending on scenarios
that agents achieve the global objective under
imperfect prior knowledge on environments
while adapting to the network and environmental changes.
Nevertheless, conventional cooperative control schemes
do not always embody such functions.
For example, in sensor deployment or coverage,
most of the control schemes as in \cite{CMKB04,LC05,CZ08}
assume prior knowledge on
a density function defined over a mission space 
and hence are hardly applicable to the mission over unknown surroundings.
A game theoretic framework as in \cite{MAS09} 
holds tremendous potential for overcoming the drawback of 
the conventional schemes.

A game theoretic approach to cooperative control 
formulates the problems as non-cooperative games
and identifies the objective in cooperative control
with arrival at some specific Nash equilibria \cite{MAS09,ZM09a,LM10}.
In particular, it is shown by J. Marden et al. \cite{MAS09} that
a variety of cooperative control problems are related to
so-called potential games \cite{MS96}. 
Unlike the other game theory,
potential games give a design perspective,
which consists of two kinds of design problem: 
utility design and learning algorithm design \cite{wierman}.
The objective of utility design is to align
local utility functions to be maximized by each agent
so that the resulting game constitutes a potential game,
where the literature \cite{S_53,WT_99} provides 
general design methodologies.
The learning algorithm design determines action selection
rules of agents so that the actions converge to Nash equilibria.

In this paper, we focus on the learning algorithm design
for cooperative control of multi-agent systems.
A lot of learning algorithms have been established 
in game theory literature and recently some algorithms
are also developed mainly by J. Marden and his collaborators.
The algorithms therein are classified
into several categories depending on their features.

The first issue is whether an algorithm presumes finite or
infinite memories.
For example, Fictitious Play (FP) \cite{MS96b}, Regret Matching (RM) \cite{SA03}, 
Joint Strategy Fictitious Play (JSFP) with Inertia  \cite{JGS09} and 
Regret-Based Dynamics \cite{JGS07} require infinite number of 
memories for executing the algorithms.
Meanwhile, Adaptive Play (AP) \cite{Y93}, Better Reply Process with Finite Memory and Inertia
\cite{Y04}, (Restrictive) Spatial Adaptive Play ((R)SAP) \cite{Young,MAS09} and Payoff-based 
Dynamics (PD) \cite{JYGS09}, Payoff-based version of Log-Linear 
Learning (PLLL) \cite{MS08} and 
Distributed Inhomogeneous Synchronous 
Learning (DISL) \cite{ZM09a}
require only a finite number of memories.
Of course, the finite memory algorithms are more preferable
for practical applications.

The second issue is what information is necessary
for executing learning algorithms.
For example, FP presumes that all the information of the
other agents' actions are available, which strongly restricts its applications.
On the other hand, RM, JSFP with Inertia and (R)SAP 
assume availability of a so-called virtual payoff, i.e.
the utility which would be obtained if an agent chose an action.
Moreover, PD, PLLL and DISL 
utilize only the actual payoffs obtained after taking actions,
which has a potential to overcome the aforementioned
drawback of the sensor coverage schemes \cite{ZM09a}.

The main objective of standard game theory is to
compute Nash equilibria and hence 
most of the above algorithms except for \cite{MAS09,MS08}
assure only convergence to pure Nash equilibria.
However, in most of cooperative control problems,
it is insufficient for achieving the global objective
and selection of the most efficient equilibria is required \cite{MS08}.
In this paper, we thus deal with convergence of the actions to the
Nash equilibria maximizing the potential function which are called
optimal Nash equilibria in this paper,
since the potential function is usually designed in many
cooperative control problems so that its maximizers
coincide with the action profiles achieving the global objectives.

The primary contribution of this paper is to develop a novel learning algorithm
called {\it Payoff-based Inhomogeneous Partially Irrational Play} (PIPIP).
The learning algorithm is based on DISL presented in \cite{ZM09a} and
inherits its several desirable features: 
(i) The algorithm requires finite and a little memory,
(ii) The algorithm is payoff-based,
(iii) The algorithm allows agents to choose actions 
in a synchronous fashion at each iteration,
(iv) The action selection procedure in PIPIP consists of simple rules,
(v) The algorithm is capable of dealing with
constraints on action selection.
The main difference of PIPIP from DISL is to allow agents to make irrational 
decisions with a certain probability, which
renders agents opportunities to escape from undesirable 
Nash equilibria.
Thanks to the irrational decisions, PIPIP assures that the actions
of the group converge in probability to optimal Nash equilibria,
though only convergence to a pure Nash equilibrium is proved
in \cite{ZM09a}.
Meanwhile, some learning algorithms as in \cite{MAS09,MS08}
dealing with convergence to the optimal 
Nash equilibria have been presented and
we also mention the advantages of PIPIP over these
learning algorithms in the following.
RSAP \cite{MAS09} guarantees convergence of the distribution 
of actions to a stationary distribution such that the probability
staying the optimal Nash equilibria is arbitrarily
specified by a design parameter.
However, RSAP is not synchronous and virtual payoff-based and hence
its applications are restricted.
PLLL \cite{MS08} also
allows irrational and exploration decisions similarly to PIPIP and
the resulting conclusion is almost compatible with this paper.
However, in \cite{MS08}, how to handle the action constraints is not explicitly shown
and convergence in probability to the optimal Nash equilibria
is not proved in a strict sense.

The secondary contribution of this paper is to
demonstrate the effectiveness of the present 
learning algorithm through experiments on a sensor coverage problem,
where the learning algorithm is applied to a robotic system
compensated by local controllers and logics.
Such investigations have not been sufficiently addressed
in the existing works.
Here, we mainly check the performance of the learning algorithm 
in finite time and adaptability to environmental changes.
In order to deal with the former issue, we prepare obstacles in the mission space
to generate apparent undesirable Nash equilibria.
Then, we compare the performance of PIPIP with DISL.
The results therein will support our claim that what this paper provides
is not a minor extension of \cite{ZM09a} and contains a significant 
contribution from a practical point of view.
We next demonstrate the adaptability
by employing a moving density function defined over the mission space.
Though adaptation to time-varying density is in principle expected 
for payoff-based algorithms, its demonstration has not been addressed 
in previous works.
We see from the results that desirable 
group behaviors, i.e. tracking to the moving high density region
are achieved by PIPIP even in the absence of any knowledge on the density.

This paper is organized as follows:
In Section \ref{sec:2}, we give some terminologies and basis
necessary for stating the results of this paper.
In Section \ref{sec:3}, 
we present the learning algorithm PIPIP
and state the main result associated with the algorithm,
i.e. convergence in probability to the optimal Nash equilibria.
Then, Section \ref{sec:4} gives the proof of the main result.
In Section \ref{sec:5}, we demonstrate the effectiveness of PIPIP
through experiments on a sensor coverage problem.
Finally, Section \ref{sec:6} draws conclusions.

\section{Preliminary}
\label{sec:2}
\subsection{Constrained Potential Games}
\label{sec:2.1}

In this paper, we consider a constrained strategic game
$\Gamma = (\V, \A, \{U_i(\cdot)\}_{i\in \V},\{\R_i(\cdot)\}_{i\in \V})$.
Here, ${\mathcal V} :=\{1,\cdots,n \}$ is the set of agents' 
unique identifiers. 
The set $\A$ is called a collective action set and defined as
$\A := \A_1 \times \cdots \times \A_n$, where
$\A_i,\ i\in \V$ is the set of actions which agent $i$ can take.
The function $U_i: \A \rightarrow {\mathbb R}$ is a so-called utility function
of agent $i\in \V$ and each agent basically 
behaves so as to maximize the function.
The function $\R_i: \A_i \rightarrow 2^{\A_i}$ 
provides a so-called 
constrained action set and $\R_i(a_i)$ is the set of actions which 
agent $i$ will be able to take in case he takes an action $a_i$.
Namely, at each iteration $t \in {\mathbb Z}_+ := \{0,1,2,\cdots\}$, each agent chooses
an action $a_i(t)$ from the set $\R_i(a_i(t-1))$.

Throughout this paper, we denote 
collection of actions other than agent $i$ by 
\begin{equation*}
 a_{-i}:=(a_{1},\cdots,a_{i-1},a_{i+1},\cdots,a_{n}).
\end{equation*}
Then, the joint action $a =(a_{1},\cdots,a_{n}) \in \A$ is described as $a=(a_i,a_{-i})$.
Let us now make the following assumptions.
\begin{assumption}
\label{ass:1}
The function $\R_i: \A_i \rightarrow 2^{\A_i}$  satisfies 
the following three conditions.
\begin{itemize}
\item (Reversibility \cite{MAS09})
For any $i \in {\mathcal V}$ and any actions $a_i^1, a^2_i \in {\mathcal 
	   A}_i$, the inclusion $a^2_i\in \R_i(a^1_i)$ is equivalent to
$a^1_i\in \R_i(a^2_i)$.
\item (Feasibility \cite{MAS09})
For any $i \in {\mathcal V}$ and any actions $a_i^1, a^m_i \in {\mathcal 
	   A}_i$, there exists a sequence of actions
$a_i^1 \rightarrow a_i^2 \rightarrow \cdots \rightarrow a^m_i$
satisfying $a_i^l \in \R_i(a_i^{l-1})$ for all $l \in \{1,\cdots,m\}$.
\item For any $i \in {\mathcal V}$ and any action $a_i \in \A_i$,
the number of available actions in $\R_i(a_i)$ is greater than or equal to $3$.
\end{itemize}
\end{assumption}

\begin{assumption}\label{ass:2}
For any $(a, a')$ satisfying 
$a'_i\in \R_i(a_i)$ and $a_{-i}=a'_{-i}$, 
the inequality $U_i(a')-U_i(a) < 1$ holds true
for all $i \in {\mathcal V}$.
\end{assumption}
Assumption \ref{ass:2} means that when only one agent changes his action, the 
difference in the utility function $U_i$ should be smaller than $1$.
This assumption is satisfied by just scaling 
all agents' utility functions appropriately. 

Let us now introduce the potential games 
under consideration in this paper.
\begin{definition}[Constrained Potential Games \cite{MAS09,ZM09a}] \label{def:1}
A constrained strategic game $\Gamma$ is said to 
be a constrained potential game with 
potential function $\phi :{\cal A}\rightarrow {\mathbb R}$
if for all $i\in {\mathcal V}$,
every $a_{i}\in {\cal A}_i$
and every $a_{-i} \in \prod_{j\neq i}\A_j$,
the following equation holds 
for every $a'_i\in \R_i(a_i)$.
\begin{equation}
U_i(a'_i,a_{-i})-U_i(a_i,a_{-i})=\phi (a'_i,a_{-i})-\phi (a_i,a_{-i})
\label{eq:2.1}
\end{equation}
\end{definition}

Throughout this paper, we suppose that a potential function $\phi$ 
is designed so that its  maximizers coincide with
the joint action $a$ achieving a global objective of the group.
Under the situation, (\ref{eq:2.1})
implies that if an agent changes his action,
the change of the local objective function is equal to
that of the group objective function.

We next define the Nash equilibria as below.
\begin{definition}[Constrained Nash Equillibria]
\label{def:2}
For a constrained strategic game $\Gamma$,
a collection of actions $a^*\in \A$ is said to be a constrained
pure Nash equilibrium if
the following equation holds for all $i \in {\mathcal V}$.
\begin{equation}
U_i(a^*_i,a^*_{-i})=\max_{a_i\in \R_i(a_i^*)}U_i(a_i,a^*_{-i})
\label{eq:2.2}
\end{equation}
\end{definition}
It is known \cite{ZM09a,MS96} that any constrained potential game
has at least one pure Nash equilibrium and, in particular,
a potential function maximizer
is a Nash equilibrium, which is called
an optimal Nash equilibrium in this paper.
However, there may exist undesirable pure
Nash equilibria not maximizing the potential function.
In order to reach the optimal Nash equilibria
while avoiding undesirable equilibria, 
we have to design appropriately
a learning algorithm which 
determines how to select an action at each iteration.

\subsection{Resistance Tree}
\label{sec:2.2}

Let us consider a Markov process $\{P_t^0\}$ defined over a finite state space $\X$.
A perturbation of $\{P_t^0\}$ is a Markov 
process whose transition probabilities are slightly perturbed.
Specifically, a perturbed Markov process 
$\{P_t^\varepsilon \},\ \varepsilon \in [0, 1]$ is defined 
as a process such that the transition of $\{P_t^\varepsilon \}$ 
follows $\{P_t^0\}$ with probability $1 - \varepsilon$
and does not follow with probability $\varepsilon$.
Then, we introduce a notion of {\it regular perturbation} as below.

\begin{definition}[Regular Perturbation \cite{Young}]
\label{def:3}
A family of stochastic processes $\{P^\varepsilon_t\}$ is called a regular perturbation of $\{P_t^0\}$ if the
following conditions are satisfied:
\begin{description}
\item[(A1)] For some $\varepsilon^* >0$, the process $\{P^\varepsilon _t\}$ is 
irreducible and aperiodic for all $\varepsilon \in (0,\varepsilon^*]$.
\item[(A2)] Let us denote by $P_{xy}^\varepsilon$
the transition probability from 
$x \in \X$ to $y \in \X$ along with the Markov process $\{P^\varepsilon _t\}$.
Then, 
%
${\rm lim}_{\varepsilon \rightarrow 0}P_{xy}^\varepsilon = P_{xy}^0$
%
holds for all $x, y\in \X$.
\item[(A3)] 
If $P_{xy}^\varepsilon >0$ for some $\varepsilon$, 
then there exists a real number $\chi (x\rightarrow y)\geq 0$ such that 
\begin{equation}
{\rm lim}_{\varepsilon \rightarrow 0}
\frac{P_{xy}^\varepsilon}{\varepsilon ^{\chi (x\rightarrow y)}}
\in (0,\infty ),
\label{eq:2.3}
\end{equation}
where $\chi (x\rightarrow y)$ is called {\it resistance of transition} from $x$ to $y$.
\end{description}
\end{definition}
Remark that, from (A1), if $\{P^\varepsilon_t\}$ 
is a regular perturbation of $\{P_t^0\}$, 
then $\{P^\varepsilon_t\}$ has the unique stationary 
distribution $\mu (\varepsilon)$ for each $\varepsilon>0$.

We next introduce the {\it resistance $\lambda(r)$ of a path}
$r$ from $x \in \X$ to $x' \in \X$ along with transitions 
$x^{(0)} = x \rightarrow x^{(2)}\rightarrow\cdots\rightarrow x^{(m)} = 
x'$ as the value satisfying
\begin{equation}
\lim_{\varepsilon\rightarrow 0} \frac{P^\varepsilon(r)}{\varepsilon^{\lambda(r)}} \in (0, \infty),
\label{eq:2.4}
\end{equation}
where $P^\varepsilon(r)$ denotes the probability of the sequence of transitions.
Then, it is easy to confirm that $\lambda(r)$ is simply given by
\begin{equation}
 \lambda(r) = \sum_{i =0}^{m-1}\chi(x^{(i)}\rightarrow x^{(i+1)}).
\label{eq:2.5}
\end{equation}

A state $x \in \X$ 
is said to communicate with state $y \in \X$ if both $x \rightsquigarrow y$ 
and $y \rightsquigarrow x$ hold, where the notation $x \rightsquigarrow y$ 
implies that $y$ is accessible from $x$ i.e.
a process starting at state $x$ has non-zero probability of 
transitioning into $y$ at some point. 
A {\it recurrent communication class} is a class such that 
every pair of states in the class communicates with each other and
no state outside the class is accessible to the class.
Now, let $H_1,\cdots ,H_J$ be recurrent communication classes 
 of Markov process $\{P_t^0\}$.
Then, within each class, there is a path with zero resistance from every state to every other.
In case of a perturbed Markov process $\{P_t^\varepsilon\}$,
there may exist several paths from states in $H_l$ to states in $H_k$
for any two distinct recurrent communication classes $H_l$ and $H_k$.
The minimal resistance among all such paths
is denoted by $\chi_{lk}$.

Let us now define a weighted complete directed graph $G = (\H, \H \times 
\H, \W)$ over the recurrent communication classes $\H = \{H_1,\cdots,H_J\}$,
where the weight $w_{lk}\in \W$ of each edge $(H_l,H_k)$
is equal to the minimal resistance $\chi_{lk}$.
We next define {\it $l$-tree} which is 
a spanning tree over $G$ with a root node $H_l\in \H$.
We also denote by $\G(l)$ the set of all $l$-trees. 
The {\it resistance of an $l$-tree} is the sum of
the weights on all the edges of the tree.
The {\it stochastic potential} of the recurrent communication class $H_l$ is the
minimal resistance among all $l$-trees in $\G(l)$.
We also introduce the 
notion of {\it stochastically stable state} as below.
\begin{definition}[Stochastically Stable State \cite{Young}]
\label{def:4}
A state $x\in \X$ is said to be stochastically stable, if $x$ satisfies 
${\lim}_{\varepsilon\rightarrow 0+}\mu _x(\varepsilon)>0$, where
$\mu_x(\varepsilon)$ is the value of an element of 
stationary distribution $\mu(\varepsilon)$ corresponding to state $x$.
\end{definition}

Using the above terminologies, we introduce the following well known result which connects
the stochastically stable states and stochastic potential.
\begin{proposition}\cite{Young}
\label{prop:1}
Let $\{P^\varepsilon_t\}$ be a regular perturbation of $\{P_t^0\}$. 
Then ${\lim}_{\varepsilon\rightarrow 0+}\mu (\varepsilon)$ exists and 
the limiting distribution $\mu(0)$ is a stationary distribution of $\{P_t^0\}$. 
Moreover the stochastically stable states are contained 
in the recurrent communication classes with minimum stochastic potential.
\end{proposition}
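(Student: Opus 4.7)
The plan is to rely on the Markov chain tree theorem (often attributed to Freidlin--Wentzell / Anantharam--Tsoucas), which expresses the stationary distribution of an irreducible aperiodic chain as a ratio of sums over rooted spanning trees. Specifically, for each $\varepsilon\in(0,\varepsilon^*]$ and each $x\in\X$,
\[
 \mu_x(\varepsilon) \;=\; \frac{\displaystyle\sum_{T\in\T_x}\prod_{(y\to z)\in T}P^{\varepsilon}_{yz}}{\displaystyle\sum_{x'\in\X}\sum_{T\in\T_{x'}}\prod_{(y\to z)\in T}P^{\varepsilon}_{yz}},
\]
where $\T_x$ denotes the set of spanning trees of $\X$ rooted at $x$. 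By assumption (A3), each edge factor behaves like $c_{yz}\,\varepsilon^{\chi(y\to z)}$ as $\varepsilon\to 0$ (with the convention that edges absent from $\{P^0_t\}$ carry strictly positive resistance). Consequently, numerator and denominator are, to leading order in $\varepsilon$, finite linear combinations of powers of $\varepsilon$ with positive coefficients.

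The first step is to establish existence of $\lim_{\varepsilon\to 0+}\mu(\varepsilon)$. Writing $r(T):=\sum_{(y\to z)\in T}\chi(y\to z)$ and $r_x:=\min_{T\in\T_x} r(T)$, divide both numerator and denominator of the formula above by $\varepsilon^{r^*}$, where $r^*:=\min_{x'\in\X}r_{x'}$. Each ratio $P^\varepsilon_{yz}/\varepsilon^{\chi(y\to z)}$ tends to a positive limit by (A3), and terms with exponent strictly greater than the minimum vanish; the limit of $\mu_x(\varepsilon)$ therefore exists. Call it $\mu_x(0)$. That $\mu(0)$ is a stationary distribution of $\{P_t^0\}$ then follows by passing to the limit in the identity $\mu(\varepsilon)P^\varepsilon=\mu(\varepsilon)$ and using (A2).

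The second step is to reduce the spanning-tree minimization from the full state space $\X$ to the directed graph $G=(\H,\H\times\H,\W)$ over recurrent communication classes. Since every pair of states inside a recurrent class $H_l$ of $\{P^0_t\}$ is connected by a zero-resistance path, one may contract each $H_l$ to a single super-node without altering the minimum resistance of any spanning tree rooted at a state inside a recurrent class. Moreover, by definition the minimum resistance of a path joining any state of $H_l$ to any state of $H_k$ equals $\chi_{lk}$, which is precisely the weight $w_{lk}$ of the edge $(H_l,H_k)$ in $G$. Transient states of $\{P^0_t\}$ can only increase $r_x$ because one must additionally spend resistance to leave them. It follows that $\min_{x\in H_l}r_x$ equals the stochastic potential of $H_l$, and this minimum is strictly less than $r_x$ for any $x$ in a transient state.

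Combining the two steps, $\mu_x(\varepsilon)$ behaves like $\varepsilon^{r_x-r^*}$ up to a positive multiplicative constant, so $\lim_{\varepsilon\to 0+}\mu_x(\varepsilon)>0$ forces $r_x=r^*$. This in turn forces $x$ to lie in a recurrent communication class whose stochastic potential attains the minimum over $\H$, proving the stochastic-stability claim. I expect the main obstacle to be the bookkeeping in the second step: carefully arguing that the contraction of each recurrent class preserves the minimum resistance of rooted spanning trees and that transient states are strictly penalized. Once this reduction from $\X$ to $\H$ is in place, the rest is a direct comparison of leading powers of $\varepsilon$.
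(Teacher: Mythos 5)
The paper does not actually prove Proposition \ref{prop:1}; it is imported verbatim from Young's monograph \cite{Young}, so there is no in-paper argument to compare against. Your proposal reconstructs what is essentially the canonical proof from that source (and from Freidlin--Wentzell theory): represent $\mu(\varepsilon)$ by the Markov chain tree theorem, extract the leading power of $\varepsilon$ from each rooted-spanning-tree product using (A3), obtain existence of the limit and stationarity of $\mu(0)$ under $\{P_t^0\}$ via (A2), and then identify the minimal tree resistance $r_x$ for $x$ in a recurrent class with the stochastic potential of that class computed over the contracted graph on $\H$. That is the right architecture, and the reduction step you flag as the ``main obstacle'' is indeed where the real work sits: one must check that the minimum-resistance paths realizing the weights $\chi_{lk}$ can be assembled into a spanning tree of $\X$ without interference, and conversely that projecting a full-graph tree onto $\H$ does not decrease resistance.

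One justification in your sketch is wrong as stated: you claim transient states are penalized ``because one must additionally spend resistance to leave them.'' Leaving a transient state of $\{P_t^0\}$ costs zero resistance --- by definition a transient state has positive-probability (hence resistance-zero) transitions out of it under the unperturbed chain; resistance is incurred only when \emph{exiting a recurrent class}. The clean way to exclude transient states is the one your own first step already hands you: $\mu(0)$ is a stationary distribution of $\{P_t^0\}$, and any stationary distribution of a finite chain vanishes on transient states, so $\lim_{\varepsilon\to 0+}\mu_x(\varepsilon)=0$ there automatically. (Alternatively, a tree-surgery argument re-rooting a minimal tree from a transient $x$ to a recurrent state along a zero-resistance path shows $r_x$ cannot beat the minimum over recurrent classes, but that requires more care than your one-line claim.) With that repair, the argument is sound and matches the cited reference.
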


\subsection{Ergodicity}
\label{sec:2.3}

Discrete-time Markov processes can be divided into two types: 
time-homogeneous and time-inhomogeneous, where a Markov process
$\{P_t\}$ is said to be time-homogeneous if
the transition matrix denoted by $P_t$ is independent of
the time and to be a time-inhomogeneous
if it is time dependent.
We also denote the probability of the state transition from
time $k_0$ to time $k$ by $P(k_0,k) = \prod^{k-1}_{t=k_0}P_t,\  0 \leq k_0 
< k$.

For a Markov process $\{P_t\}$, we introduce the notion of ergodicity.
\begin{definition}[Strong Ergodicity \cite{er}]
\label{def:5}
A Markov process $\{P_t\}$ is said to be strongly ergodic 
if there exists a stochastic vector $\mu ^*$ such 
that for any distribution $\mu$ on $\X$ and time $k_0$, 
we have ${\rm lim}_{k\rightarrow \infty}\mu P(k_0,k) = \mu^*$.
\end{definition}
\begin{definition}[Weak Ergodicity \cite{er}]
A Markov process $\{P_t\}$ is said to be weakly ergodic if  
the following equation holds.
\begin{equation*}
\lim_{k\rightarrow \infty}(P_{xz}(k_0,k)-P_{yz}(k_0,k)) = 0\ \ 
{\forall x},y,z \in \X,\ {\forall k_0}\in {\mathbb Z}_+
\end{equation*}
\end{definition}
If $\{P_t\}$ is strongly ergodic, the distribution $\mu $ converges 
to the unique distribution $\mu^*$ from any initial state.
Weak ergodicity implies that the information on
the initial state vanishes as time increases though
convergence of $\mu$ may not be guaranteed. 
Note that the notions of weak and strong ergodicity
are equivalent in case of time-homogeneous Markov processes.

We finally introduce the following well-known results on ergodicity.
\begin{proposition}\cite{er}
\label{prop:2}
A Markov process $\{P_t\}$ is strongly ergodic if the following conditions hold:
\begin{description}
\item[(B1)] The Markov process $\{P_t\}$ is weakly ergodic.
\item[(B2)] For each $t$, there exists a stochastic vector 
$\mu ^{t}$ on $\X$ such that $\mu ^{t}$ is the left
eigenvector of the transition matrix $P(t)$ with eigenvalue 1.
\item[(B3)] The eigenvector $\mu^{t}$ in (B2) satisfies 
$\sum^{\infty}_{t=0}\sum_{x\in \X}| \mu^{t}_{x} - \mu^{t+1}_{x} | < \infty $.
Moreover, if $\mu ^{*}=\lim_{t\rightarrow \infty }\mu^{t}$, then $\mu ^{*}$ is 
the vector in Definition \ref{def:5}.
\end{description}
\end{proposition}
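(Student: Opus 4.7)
The plan is to prove strong ergodicity by showing that for an arbitrary initial distribution $\mu$ and initial time $k_0$, the iterates $\mu P(k_0,k)$ converge to a common limit $\mu^*$ as $k\to\infty$. Condition (B3) immediately tells us that the sequence $\{\mu^t\}$ is Cauchy in the $\ell^1$ sense on the finite state space $\X$, so $\mu^* := \lim_{t\to\infty}\mu^t$ exists and is a stochastic vector; this is the candidate limiting distribution, matching the one promised in (B3).

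The core ingredient is a telescoping identity built on the stationarity condition (B2). Since $\mu^t P(t) = \mu^t$, for any $m \leq k$ one obtains
\begin{equation*}
\mu^m P(m,k) - \mu^k \;=\; \sum_{t=m}^{k-1}\bigl(\mu^t - \mu^{t+1}\bigr) P(t+1,k).
\end{equation*}
Because right-multiplication of a signed row vector by a stochastic matrix does not increase its $\ell^1$ norm, taking norms yields
\begin{equation*}
\|\mu^m P(m,k) - \mu^k\|_1 \;\leq\; \sum_{t=m}^{k-1}\|\mu^t - \mu^{t+1}\|_1 \;\leq\; \sum_{t=m}^{\infty}\|\mu^t - \mu^{t+1}\|_1.
\end{equation*}
By (B3) the right-hand side is the tail of a convergent series and can be made as small as desired by choosing $m$ large. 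Combined with $\mu^k \to \mu^*$, this shows that for all sufficiently large $m$ and all $k>m$, the vector $\mu^m P(m,k)$ is arbitrarily close to $\mu^*$.

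The final step invokes weak ergodicity (B1) to absorb the arbitrary initial condition. Decompose $\mu P(k_0,k) = \bigl[\mu P(k_0,m)\bigr] P(m,k)$ and observe that both $\mu P(k_0,m)$ and $\mu^m$ are probability distributions over $\X$. Weak ergodicity, expressed through the uniform collapse of rows of $P(m,k)$, implies that for any two initial distributions $\nu,\nu'$ one has $\|\nu P(m,k) - \nu' P(m,k)\|_1 \to 0$ as $k\to\infty$. Applying this with $\nu = \mu P(k_0,m)$ and $\nu' = \mu^m$ gives $\|\mu P(k_0,k) - \mu^m P(m,k)\|_1 \to 0$. A single triangle inequality combining this with the previous bound then delivers $\mu P(k_0,k) \to \mu^*$, i.e.\ strong ergodicity.

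The main subtlety is the order of quantifiers in the final estimate: for each prescribed $\epsilon>0$ one must first fix $m$ large enough so that both $\|\mu^m - \mu^*\|_1$ and the series tail $\sum_{t\geq m}\|\mu^t - \mu^{t+1}\|_1$ are smaller than $\epsilon$, and only then let $k\to\infty$ while $m$ is held fixed, so that weak ergodicity of the block $P(m,k)$ may be invoked. The telescoping identity is exactly what decouples the starting time $k_0$ from the analysis and makes this freezing legitimate; without (B2) the telescoped summands would carry residual stationarity errors, and without (B3) the tail could not be controlled.
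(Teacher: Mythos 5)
The paper states this proposition without proof, citing it directly from Isaacson and Madsen, so there is no internal argument to compare against; your proof is the standard textbook argument for this result and it is correct. The telescoping identity $\mu^m P(m,k)-\mu^k=\sum_{t=m}^{k-1}(\mu^t-\mu^{t+1})P(t+1,k)$ follows by induction from $\mu^tP(t)=\mu^t$, the $\ell^1$-contractivity of stochastic matrices controls the sum by the tail in (B3), and your quantifier ordering (fix $m$ first, then let $k\to\infty$) is exactly right. The only step worth making explicit is that the paper's pointwise definition of weak ergodicity, $P_{xz}(k_0,k)-P_{yz}(k_0,k)\to 0$, upgrades to the $\ell^1$ statement $\|\nu P(m,k)-\nu'P(m,k)\|_1\to 0$ you invoke; on the finite state space $\X$ this is immediate by writing $\nu-\nu'$ as a zero-sum combination of rows, so the gap is cosmetic rather than substantive.
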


\section{Learning Algorithm and Main Result}
\label{sec:3}

In this section, we present a learning algorithm
called {\it Payoff-based Inhomogeneous Partially Irrational Play (PIPIP)}
and state the main result of this paper.
At each iteration $t\in {\mathbb Z}_+$, the learning algorithm chooses
an action according to the following procedure assuming that
each agent $i\in \V$ stores previous two chosen actions 
$a_i(t-2), a_i(t-1)$ and the outcomes $U_i(a(t-2)), U_i(a(t-1))$.
Each agent first updates a parameter
$\varepsilon$ called {\it exploration rate} by
\begin{equation}
\varepsilon(t)=t^{-\frac{1}{n(D+1)}},
\label{eq:3.1}
\end{equation}
where $D$ is defined as $D:={\max}_{i\in {\mathcal V}} D_i$ and
$D_i$ is the minimal number of steps required for
transitioning between any two actions of agent $i$.

Then, each agent compares the values of $U_i(a(t-1))$ and $U_i(a(t-2))$.
If $U_i(a(t-1))\geq U_i(a(t-2))$ holds, then
he chooses action $a_i(t)$ according to the rule:
\begin{itemize}
\item $a_i(t)$ is randomly chosen from $\R_i(a_i(t-1))\setminus \{a_i(t-1)\}$
with probability $\varepsilon(t)$,
(it is called an {\it exploratory} decision).
\item $a_i(t)=a_i(t-1)$ with probability $1-\varepsilon(t)$.
\end{itemize}
Otherwise ($U_i(a(t-1))< U_i(a(t-2))$), action $a_i(t)$ is chosen
according to the rule:
\begin{itemize}
\item $a_i(t)$ is randomly chosen from $\R_i(a_i(t-1))
\setminus \{a_i(t-1), a_i(t-2)\}$  with probability $\varepsilon(t)$
(it is called an {\it exploratory} decision).
\item  $a_i(t) = a_i(t-1)$
with probability 
\begin{equation}
 (1-\varepsilon(t))(\kappa\cdot \varepsilon
      (t)^{\Delta_i}),\ \Delta_i := U_i(a(t-2))-U_i(a(t-1))
\label{eq:3.2}
\end{equation}
(it is called an {\it irrational} decision).
\item  $a_i(t)=a_i(t-2)$ with probability 
\begin{equation}
(1-\varepsilon(t))(1-\kappa\cdot \varepsilon(t)^{\Delta_i}).
\label{eq:3.3}
\end{equation}
\end{itemize}
The parameter $\kappa$ should be chosen so as to satisfy
\begin{equation}
\kappa\in \Big(\frac{1}{C -1}, \frac{1}{2}\Big],\ 
C := \max_{i\in {\mathcal V}}\max_{a_i\in {\mathcal A}_i}|\R_i(a_i)|,
\label{eq:3.4}
\end{equation}
where $|\R_i(a_i)|$ is the number of elements of the set $\R_i(a_i)$.
It is clear under the third item of Assumption \ref{ass:1} that 
the action $a_i(t)$ is well-defined.

Finally, each agent $i$ executes the selected action $a_i(t)$ and computes
the resulting utility $U_i(a(t))$ via feedbacks from environment and neighboring agents.
At the next iteration, agents repeat the same procedure.

The algorithm PIPIP is compactly described in Algorithm \ref{alg:1},
where the function ${\rm rnd}({\mathcal A}')$ outputs an action
chosen randomly from the set ${\mathcal A}'$.
Note that the algorithm with a constant $\varepsilon(t)=\varepsilon\in (0,1/2]$
is called {\it Payoff-based Homogeneous Partially Irrational Play (PHPIP)},
which will be used for the proof of the main result of this paper.

\begin{algorithm}[t]
\caption{Payoff-based Inhomogeneous Partially Irrational Play (PIPIP)}
\label{alg:1}
\begin{algorithmic}
\STATE {\bf Initialization:}
Action $a$ is chosen randomly from $\A$.
Set $a_i^1 \leftarrow a_i,\  a_i^2\leftarrow a_i,\
U_i^1\leftarrow U_i(a),\ U_i^2 \leftarrow U_i(a)$,
$\Delta_i \leftarrow 0$ for all $i\in \V$ and $t \leftarrow 2$.
\STATE {\bf Step 1:} $\varepsilon\leftarrow t^{(-1/(n(D+1)))}$. 
\STATE {\bf Step 2:} 
If $U_i^1\geq U_i^2$, then set
\begin{equation*}
 a_i^{tmp} \leftarrow \left\{
\begin{array}{ll}
{\rm rnd}(\R_i(a_i^1)\setminus \{a_i^1\}), & \mbox{w.p. } 
\varepsilon\\
a_i^1,& \mbox{w.p. }1-\varepsilon
\end{array}
\right..
\end{equation*}
Otherwise, set
\begin{equation*}
 a_i^{tmp} \leftarrow \left\{
\begin{array}{ll}
{\rm rnd}(\R_i(a_i^1)\setminus \{a_i^1,a_i^2\}), & \mbox{w.p. } 
\varepsilon(t)\\
a_i^1,& \mbox{w.p. }(1-\varepsilon)(\kappa\cdot \varepsilon^{\Delta_i})\\
a_i^2,& \mbox{w.p. }
(1-\varepsilon)(1-\kappa\cdot \varepsilon^{\Delta_i})
\end{array}
\right..
\end{equation*}
\STATE {\bf Step 3:} Execute the selected action $a_i^{tmp}$
and receive $U_i^{tmp} \leftarrow U_i(a^{tmp})$. 
\STATE {\bf Step 4:} Set
$a_i^2 \leftarrow a^1_i,\ a_i^1 \leftarrow a_i^{tmp},\ U_i^2 \leftarrow U_i^1,\
U_i^1 \leftarrow U_i^{tmp}$ and 
$\Delta_i \leftarrow U_i^2 - U_i^1$.
\STATE {\bf Step 5:} 
$t \leftarrow t+1$ and go to {\bf Step 1}. 
\end{algorithmic}
\end{algorithm}

PIPIP is developed based on the learning algorithm
DISL presented in \cite{ZM09a}.
The main difference of PIPIP from DISL 
is that agents may choose the 
action with the lower utility in {\bf Step 2} with probability 
$(1-\varepsilon)(\kappa\cdot \varepsilon^{\Delta_i})$
which depends on the difference of the last two steps' utilities $\Delta_i$
and the parameters $\kappa$ and $\varepsilon$. 
Thanks to the irrational decisions, agents can escape from
undesirable Nash equilibria as will be proved in the next section.

We are now ready to state the main result of this paper.
Before mentioning it, we define 
\begin{equation}
{\mathcal B}:=\{(a,a')\in {\mathcal A}\times {\mathcal A}|\ a'_i\in 
\R_i(a_i)\ \forall i \in {\mathcal V}\}.
\label{eq:3.5}
\end{equation}
and $\zeta (\Gamma)$ as
the set of the optimal Nash equilibria, i.e. potential function maximizers, 
of a constrained potential game $\Gamma$.

\begin{theorem}\label{thm:1}
Consider a constrained potential game 
$\Gamma$ satisfying Assumptions \ref{ass:1} and \ref{ass:2}.
Suppose that each agent behaves according to Algorithm \ref{alg:1}.
Then, a Markov process $\{P_t\}$ is defined over the space ${\mathcal B}$
and the following equation is satisfied.
\begin{equation}
\lim_{t\rightarrow \infty}{\rm Prob}\left[z(t)\in 
{\rm diag}\left(\zeta(\Gamma)\right)\right]=1,
\label{eq:3.6}
\end{equation}
where $z(t) := (a(t-1),a(t))$ and
$\diag(\A') = \{(a,a)\in \A \times \A|\ a\in {\mathcal A}'\},\ \A' \subseteq \A$.
\end{theorem}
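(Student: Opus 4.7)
The overall plan is a two-stage argument. First I treat the homogeneous variant PHPIP (fixed exploration rate $\varepsilon\in(0,1/2]$) as a regular perturbation of its noise-free kernel and locate the stochastically stable states via Proposition \ref{prop:1}; then I lift the conclusion to PIPIP by applying the strong-ergodicity criterion in Proposition \ref{prop:2} to the schedule $\varepsilon(t)=t^{-1/(n(D+1))}$. The state $z(t)=(a(t-1),a(t))$ evolves on $\BB$ by Assumption \ref{ass:1}, so both kernels are Markov on $\BB$. For the unperturbed kernel $\{P_t^0\}$ the dynamics are essentially deterministic: on any diagonal state $(a,a)$ no agent compares declining utilities nor explores, so the state is absorbing, and from any off-diagonal state the greedy ``revert-if-worse, stay-if-better'' rule drives the process to some diagonal state within at most two iterations. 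Hence the recurrent communication classes are exactly the singletons $\{(a,a)\}$, $a\in\A$. Regular perturbation conditions (A1)--(A3) are then routine: every exploratory decision contributes resistance $1$ via the factor $\varepsilon$ in Step 2, while an irrational decision by agent $i$ contributes resistance $\Delta_i\in(0,1)$ via $\kappa\varepsilon^{\Delta_i}$, where $\Delta_i<1$ is secured by Assumption \ref{ass:2}.

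The main technical step is to identify the recurrent classes of minimum stochastic potential. I would first compute the weight $\chi_{a\to\tilde a}$ of a single-agent edge in the graph over $\A$: reaching $(\tilde a,\tilde a)$ from $(a,a)$ via a deviation by agent $i$ costs one exploration step followed by at most one irrational decision used to anchor the new action when its utility has fallen, giving
\[ \chi_{a\to\tilde a}\;=\;1+\max\{0,U_i(a)-U_i(\tilde a)\}\;=\;1+\max\{0,\phi(a)-\phi(\tilde a)\}, \]
where the second equality is the potential game identity (\ref{eq:2.1}). Any minimum-resistance path between two diagonal states can be decomposed into such single-agent hops by the feasibility clause of Assumption \ref{ass:1}, and simultaneous multi-agent moves never strictly help because each extra deviating agent contributes another exploration. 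For any candidate root $a^*\in\A$, an $a^*$-tree in $\G(a^*)$ built from single-agent edges has total resistance
\[ (|\A|-1)+\sum_{a\neq a^*}\max\{0,\phi(a_{\mathrm{parent}})-\phi(a)\}, \]
and the minimum of the latter sum over tree structures vanishes iff every $a\neq a^*$ admits a non-decreasing-potential parent in the tree, which is possible precisely when $a^*$ is a global maximizer of $\phi$. Thus the stochastic potential of $\{(a^*,a^*)\}$ is minimized iff $a^*\in\zeta(\Gamma)$, and Proposition \ref{prop:1} yields that the stochastically stable states of PHPIP are exactly $\diag(\zeta(\Gamma))$.

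Finally I would transfer the conclusion to the inhomogeneous PIPIP through Proposition \ref{prop:2}. Condition (B2) is immediate from regular perturbation: each $P_t^{\varepsilon(t)}$ admits a unique stationary distribution $\mu^t$. For (B1) I would lower-bound the $n(D+1)$-step transition probability between any pair of states by noting that, within a block of $n(D+1)$ consecutive iterations, each agent can execute a prescribed sequence of at most $D$ exploratory moves (possibly interspersed with irrational decisions) with probability $\Omega(\varepsilon(t)^{n(D+1)})=\Omega(1/t)$; the divergence of $\sum_t 1/t$ combined with a Dobrushin-coefficient argument then delivers weak ergodicity. Condition (B3) follows from smoothness of $\mu(\varepsilon)$ in $\varepsilon$ near $0$ combined with the summability $\sum_t|\varepsilon(t)-\varepsilon(t+1)|<\infty$, which is ensured by the derivative estimate $|\varepsilon'(t)|=O(t^{-1-1/(n(D+1))})$. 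The limit $\mu^{*}=\lim_{\varepsilon\to 0+}\mu(\varepsilon)$ is, by the resistance-tree analysis, supported on $\diag(\zeta(\Gamma))$, and Proposition \ref{prop:2} then yields (\ref{eq:3.6}). The most delicate point will be the simultaneous balancing in (B1) and (B3): the exploration rate must vanish slowly enough that the chain continues to mix globally, yet rapidly enough that the sequence $\{\mu^t\}$ varies summably and concentrates on the stochastically stable set, and the specific exponent $1/(n(D+1))$ is tuned precisely to satisfy both.
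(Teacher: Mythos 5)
Your overall architecture (regular perturbation of PHPIP, resistance trees over the diagonal recurrent classes, then strong ergodicity to pass to the vanishing schedule $\varepsilon(t)$) is the same as the paper's, and your edge-weight computation $\chi_{a\to\tilde a}=1+\max\{0,\phi(a)-\phi(\tilde a)\}$ for single-agent hops agrees with Lemma \ref{lem:3} (modulo a sign slip in your tree-cost display: for the edge directed from $a$ toward the root the summand should be $\max\{0,\phi(a)-\phi(a_{\mathrm{parent}})\}$). The genuine gap is in the step that identifies the minimum-stochastic-potential roots. You argue that the excess cost $\sum_{a\neq a^*}\max\{0,\cdot\}$ can be made to vanish \emph{precisely when} $a^*$ is a global maximizer of $\phi$. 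Only the ``only if'' direction is true. If the game has a suboptimal Nash equilibrium $a'$ that is a strict local maximizer of $\phi$ over the single-move graph $\E_{single}$ --- exactly the situation the theorem is meant to address --- then \emph{every} outgoing edge of $a'$ has strictly positive excess cost, so no spanning tree rooted at the global maximizer has vanishing excess, and your criterion no longer separates optimal from non-optimal roots. Your argument shows that a non-maximizer cannot have stochastic potential $|\A|-1$, but it provides no comparison between the (both strictly positive) excess costs of trees rooted at $z_{opt}$ and trees rooted at a suboptimal local maximizer, which is the whole content of the theorem.

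The paper closes this gap with an exchange argument that your proposal lacks: Lemma \ref{lem:4} shows that for an $m$-straight-route $r$ and its reverse $r'$ the resistances satisfy $\lambda(r)=\lambda(r')+\phi(a^0)-\phi(a^1)$, a telescoping identity obtained by applying (\ref{eq:2.1}) hop by hop. Proposition \ref{prop:3} then takes any minimal tree $T$ rooted at a non-optimal state, reverses the unique tree-path from $z_{opt}$ to that root (legitimate because Lemma \ref{lem:6} guarantees the minimal tree uses only $\E_{single}$ edges and reversibility in Assumption \ref{ass:1} makes the reverse route feasible), and obtains a tree rooted at $z_{opt}$ with strictly smaller total resistance. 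You would need to supply this comparison, or an equivalent one, before invoking Proposition \ref{prop:1}. The remaining portions --- the diagonal recurrence classes, the $w_d\geq 2$ versus $w_s<2$ separation that rules out multi-agent edges, and the strong-ergodicity transfer to PIPIP --- match the paper, though your claim that the unperturbed dynamics reach the diagonal ``within at most two iterations'' fails in the mixed case where some agents improve and others do not; the paper handles that case by induction on the number of agents.
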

Equation (\ref{eq:3.6}) means that the probability that 
agents executing PIPIP take 
one of the potential function maximizers converge to $1$.
The proof of this theorem will be shown in the next section.

In PIPIP, the parameter $\varepsilon(t)$ is updated by (\ref{eq:3.1})
 to prove the above theorem, which is the same as DISL.
However, this update rule takes long time to
reach a sufficiently small $\varepsilon(t)$ when
the size of the game, i.e. $n(D+1)$ is large.
Thus, from the practical point of view, we might 
have to decrease $\varepsilon(t)$ based on heuristics
 or use PHPIP with a sufficiently small $\varepsilon$.
Even in such cases, the following theorem at least holds 
similarly to the paper \cite{JYGS09}.
\begin{theorem}
\label{thm:2}
Consider a constrained potential game 
$\Gamma$ satisfying Assumptions \ref{ass:1} and \ref{ass:2}.
Suppose that each agent behaves according to PHPIP.
Then, given any probability $p<1$, 
if the exploration rate $\varepsilon$ is sufficiently small, 
for all sufficiently large time $t\in {\mathbb Z}_+$, the following equation holds.
\begin{equation}
{\rm Prob} \left[z(t)\in {\rm diag}\left(\zeta(\Gamma)\right)\right] > 
 p.
\label{eq:3.7}
\end{equation}
\end{theorem}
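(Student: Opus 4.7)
The plan is to exploit the fact that, unlike PIPIP, PHPIP uses a constant exploration rate $\varepsilon$ and therefore induces a time-homogeneous Markov chain $\{P_t^\varepsilon\}$ on the state space $\BB$ with $z(t) = (a(t-1), a(t))$. The strategy is to invoke Proposition \ref{prop:1} to pin down where the stationary distribution $\mu(\varepsilon)$ concentrates as $\varepsilon \to 0$, and then to use standard convergence of a time-homogeneous ergodic chain to its stationary distribution to upgrade this to a statement about $z(t)$ for large $t$.

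First, I would verify that $\{P_t^\varepsilon\}$ is a regular perturbation of the unperturbed chain $\{P_t^0\}$ in the sense of Definition \ref{def:3}. Condition (A1) follows from Assumption \ref{ass:1}: the feasibility property guarantees that every joint action is reachable from every other through finitely many one-coordinate exploratory moves (each of positive probability of order $\varepsilon$), while the probability of staying in place in Algorithm \ref{alg:1} provides aperiodicity. Condition (A2) is immediate from the explicit form of the transition probabilities in equations (\ref{eq:3.2})--(\ref{eq:3.3}). For (A3), I would read off the resistance of each transition directly from Algorithm \ref{alg:1}: a ``stay'' move contributes $0$, an exploratory one-coordinate move contributes $1$, and an irrational retention of a worse action contributes an additional $\Delta_i$ by (\ref{eq:3.2}). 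Next, I would identify the recurrent communication classes of $\{P_t^0\}$. Inspection of the $\varepsilon = 0$ dynamics shows that every diagonal state $(a,a) \in \diag(\A)$ is absorbing (each agent has $U_i^1 = U_i^2$ and therefore stays), while every off-diagonal state transitions to a diagonal state in one step. Hence the singletons $\{(a,a)\}$ for $a \in \A$ are exactly the recurrent classes, and the directed graph $G$ of Section \ref{sec:2.2} has vertex set $\diag(\A)$.

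The core technical step, and the main obstacle, is to show that the stochastically stable states of $\{P_t^\varepsilon\}$ are exactly $\diag(\zeta(\Gamma))$, i.e.\ that the minimum-resistance rooted spanning tree on $G$ has its root at a diagonal optimal Nash equilibrium. Here I would exploit the potential function identity (\ref{eq:2.1}) together with Assumption \ref{ass:2} (which ensures that each single-coordinate utility gap satisfies $\Delta_i < 1$ and hence the irrational resistances $\Delta_i$ sit in the correct regime relative to the unit resistance of an exploration). Along any path in $G$ between two diagonal states, the resistances contributed by irrational moves telescope, via (\ref{eq:2.1}), into an expression depending only on the drop in $\phi$ along the path; consequently, the stochastic potential of $(a^*,a^*)$ is minimized precisely when $\phi(a^*)$ is maximal over $\A$. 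This is essentially the same resistance-tree computation that underlies the proof of Theorem \ref{thm:1} in Section \ref{sec:4}, so I would expect to reuse or directly cite that calculation rather than redo it.

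Finally, Proposition \ref{prop:1} yields $\mu(0) := \lim_{\varepsilon \to 0+} \mu(\varepsilon)$ supported on $\diag(\zeta(\Gamma))$. Given any $p < 1$, by continuity I can therefore choose $\varepsilon^\dagger > 0$ small enough that $\mu(\varepsilon)\bigl(\diag(\zeta(\Gamma))\bigr) > p$ for all $\varepsilon \in (0, \varepsilon^\dagger]$. Because $\{P_t^\varepsilon\}$ is time-homogeneous, irreducible, and aperiodic on the finite state space $\BB$, it is strongly ergodic (weak and strong ergodicity coincide in the time-homogeneous case), so from any initial distribution
\begin{equation*}
\lim_{t \to \infty} \mathrm{Prob}\bigl[z(t) \in \diag(\zeta(\Gamma))\bigr] \;=\; \mu(\varepsilon)\bigl(\diag(\zeta(\Gamma))\bigr) \;>\; p.
\end{equation*}
Hence for all sufficiently large $t$ the desired inequality (\ref{eq:3.7}) holds, completing the proof.
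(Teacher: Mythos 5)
Your proposal is correct and follows essentially the same route as the paper, which proves Theorem \ref{thm:2} by combining Lemma \ref{lem:1} (regular perturbation), Proposition \ref{prop:3} (stochastically stable states lie in $\diag(\zeta(\Gamma))$, via the same resistance-tree computation you sketch), and Proposition \ref{prop:1}, then letting the fixed-$\varepsilon$ homogeneous chain converge to its stationary distribution. The only cosmetic difference is that you claim the stochastically stable states are \emph{exactly} $\diag(\zeta(\Gamma))$ where the paper only establishes (and only needs) inclusion.
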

Theorem \ref{thm:2} assures that 
the optimal actions are eventually selected with high probability
as long as the final value of $\varepsilon(t)$ is sufficiently small
irrespective of the decay rate of $\varepsilon(t)$.

\section{Proof of Main Result}
\label{sec:4}

In this section, we prove the main result of this paper (Theorem \ref{thm:1}).
We first consider PHPIP with a constant exploration rate $\varepsilon$.
The state $z(t) = (a(t-1),a(t))$ for PHPIP with $\varepsilon$ constitutes 
a perturbed Markov process $\{P_t^\varepsilon\}$ on the state space 
${\mathcal B}=\{(a,a')\in {\mathcal A}\times {\mathcal A}|\ a'_i\in 
\R_i(a_i)\ \forall i \in {\mathcal V}\}$.

In terms of the Markov process $\{P_t^\varepsilon\}$ induced by PHPIP,
the following lemma holds.
\begin{lemma}\label{lem:1}
The Markov process $\{P_t^\varepsilon\}$ induced by PHPIP applied to
a constrained potential game $\Gamma$ is 
a regular perturbation of $\{P_t^0 \}$ under Assumption \ref{ass:1}.
\end{lemma}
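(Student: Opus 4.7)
The plan is to verify directly the three defining conditions (A1), (A2), (A3) of a regular perturbation in Definition \ref{def:3} for the chain $\{P_t^\varepsilon\}$ on $\BB$. I would first identify the unperturbed chain $\{P_t^0\}$ by substituting $\varepsilon = 0$ into the PHPIP update rule: exploratory moves disappear, and the irrational-stay factor $(1-\varepsilon)(\kappa \varepsilon^{\Delta_i})$ vanishes because $\Delta_i > 0$ whenever that branch is used, so each agent either keeps $a_i(t-1)$ (if $U_i(a(t-1)) \geq U_i(a(t-2))$) or reverts to $a_i(t-2)$ (if $U_i(a(t-1)) < U_i(a(t-2))$), both deterministically. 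The reversibility clause in Assumption \ref{ass:1} guarantees that reverting to $a_i(t-2)$ is itself a feasible move, so $\BB$ is closed under both $\{P_t^0\}$ and $\{P_t^\varepsilon\}$.

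For (A1), aperiodicity follows because from any state $(a, a') \in \BB$ a diagonal state $(a', a')$ is reachable in one step by all agents staying, and from $(a', a')$ itself there is a positive-probability self-loop (again, all agents staying), so the return-time gcd from every state equals $1$. Irreducibility uses the feasibility clause of Assumption \ref{ass:1}: each agent can be steered along a finite sequence of feasible individual moves $a_i^{(0)} \to \cdots \to a_i^{(m)}$, each of which occurs as an exploratory step with probability at least $\varepsilon/C$ where $C := \max_{i, a_i} |\R_i(a_i)|$; the third clause of Assumption \ref{ass:1} ensures the exploration sets $\R_i(a'_i) \setminus \{a'_i\}$ and $\R_i(a'_i) \setminus \{a'_i, a_i\}$ are non-empty so these lower bounds are strictly positive. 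Property (A2) is then immediate: each per-agent factor is continuous in $\varepsilon$ on $[0, \varepsilon^*]$ (noting $\varepsilon^{\Delta_i} \to 0$ since $\Delta_i > 0$), and the joint transition probability is a finite product of such factors.

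The substantive work lies in (A3). I would factor $P^\varepsilon_{xy} = \prod_{i \in \V} p_i^\varepsilon$, where $p_i^\varepsilon$ is agent $i$'s conditional probability of selecting the component of $y$ prescribed by the history encoded in $x$. Inspecting Algorithm \ref{alg:1}, each $p_i^\varepsilon$ takes one of a short list of forms --- $1 - \varepsilon$, $\varepsilon/|\R_i(a'_i) \setminus \{a'_i\}|$, $(1-\varepsilon)(\kappa \varepsilon^{\Delta_i})$, $(1-\varepsilon)(1 - \kappa \varepsilon^{\Delta_i})$, or $\varepsilon/|\R_i(a'_i) \setminus \{a'_i, a_i\}|$ --- and each admits an expansion $c_i\,\varepsilon^{\chi_i}(1 + o(1))$ with $c_i > 0$ and $\chi_i \in \{0, \Delta_i, 1\}$ as $\varepsilon \to 0$; positivity of $c_i$ uses $\kappa > 0$ from (\ref{eq:3.4}) together with the non-emptiness of the exploration sets noted above. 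Setting $\chi(x \to y) := \sum_i \chi_i$ then yields $P^\varepsilon_{xy}/\varepsilon^{\chi(x \to y)} \to \prod_i c_i \in (0, \infty)$, which is precisely (A3). The main obstacle is the bookkeeping: for each pair $(x,y)$ one must identify which of the five branches each agent is in, read off the correct $\Delta_i$ from the utilities encoded in $x$, and verify strict positivity of the leading coefficient in every case --- but once this factorisation is carried out each case reduces to an elementary series expansion.
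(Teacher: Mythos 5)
Your proposal is correct and follows essentially the same route as the paper: the paper likewise verifies (A1)--(A3) directly, proving (A3) by partitioning the agents into the five behavioural classes (its sets $\Lambda_1,\dots,\Lambda_5$) and reading off the resistance $|\Lambda_1|+|\Lambda_3|+\sum_{i\in\Lambda_4}\Delta_i$ from the resulting product formula, which is exactly your per-agent factorisation. The only divergence is in the aperiodicity step, where the paper exhibits explicit feasible cycles of coprime lengths $2$ and $3$ through each off-diagonal state via reversibility, whereas you rely on the self-loop at a reachable diagonal state --- that works, but it yields period $1$ for the off-diagonal states only because period is a class property of an irreducible chain, so you should invoke irreducibility explicitly at that point.
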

%
\begin{proof}
Consider a feasible transition $z^1\rightarrow z^2$ with 
$z^1=(a^0,a^1)\in \BB$ and $z^2=(a^1,a^2)\in \BB$ and
partition the set of agents ${\mathcal V}$ 
according to their behaviors along with the transition as 
\begin{eqnarray}
&&\Lambda _1=\{i\in {\mathcal V}|\ U_i(a^1)\geq U_i(a^0),\ a^2_i\in \R_i(a_i^1)\setminus \{a_i^1\}\},
\nonumber\\
&&
\Lambda _2=\{i\in {\mathcal V}|\ U_i(a^1)\geq U_i(a^0),\ a^2_i=a_i^1\},
\nonumber\\
&& \Lambda _3=\{i\in {\mathcal V}|\ U_i(a^1)< U_i(a^0),\ a^2_i\in \R_i(a_i^1)\setminus \{a_i^0, a_i^1\}\},
\nonumber\\
&&\Lambda _4=\{i\in {\mathcal V}|\ U_i(a^1)< U_i(a^0),\ a^2_i= a_i^1\},
\nonumber\\
&&
\Lambda _5=\{i\in {\mathcal V}|\ U_i(a^1)< U_i(a^0),\ a^2_i= a_i^0\}.
\nonumber
\end{eqnarray}
Then, the probability of the transition $z^1\rightarrow z^2$
is represented as
\begin{eqnarray}
P_{z^1z^2}^\varepsilon&=& \prod_{i\in \Lambda _1} 
\frac{\varepsilon}{| \R_i(a_i^1)| -1}\times \prod_{i\in \Lambda _2} 
(1-\varepsilon)\times \prod_{i\in \Lambda _3} \frac {\varepsilon}
{| \R_i(a_i^1)| - {h_i}}\times  
\nonumber\\
&&\times \prod_{i\in \Lambda _4}(1-\varepsilon)\kappa\varepsilon^{\Delta _i}
\times \prod_{i\in \Lambda_5}(1-\varepsilon)(1-\kappa\varepsilon^{\Delta _i}),
\label{eq:4.1}
\end{eqnarray}
where $h_i=1$ if $a_i^0=a_i^1$ and $h_i=2$ otherwise.
We see from (\ref{eq:4.1}) that
the resistance of transition $z^1\rightarrow z^2$ defined in (\ref{eq:2.3}) is given by 
$ |\Lambda _1| +|\Lambda _3| +\sum_{i\in \Lambda _4}\Delta _i$ since   
\begin{equation}
 0<\lim_{\varepsilon\rightarrow 0} 
\frac {P_{z_1z_2}^\varepsilon}
{\varepsilon^{|\Lambda_1| +|\Lambda _3| +\sum_{i\in \Lambda_4}\Delta _i}}
=\prod _{i\in \Lambda_1}\frac {1}{|\R_i(a_i^1)|-1}\prod _{i\in \Lambda_3}
\frac {1}{|\R_i(a_i^1)| - h_i}\times \kappa^{|\Lambda_4|}<+\infty
\label{eq:4.2}
\end{equation}
holds.
Thus, {\bf (A3)} in Definition \ref{def:3} is satisfied. 
In addition, it is straightforward from the procedure of PHPIP to confirm
the condition {\bf (A2)}.

It is thus sufficient to check {\bf (A1)} in Definition \ref{def:3}. 
From the rule of taking exploratory actions in 
Algorithm 1 and the second item of Assumption \ref{ass:1},
we immediately see that the set of the states accessible 
from any $z\in {\mathcal B}$ is equal to ${\mathcal B}$. 
This implies that the perturbed Markov process 
$\{P_t^\varepsilon\}$ is irreducible.
We next check aperiodicity of $\{P_t^\varepsilon\}$. 
It is clear that 
any state in ${\rm diag}({\mathcal A})=\{(a,a)\in \A \times \A|\
 a\in {\mathcal A}\}$ has period 1. 
Let us next pick any 
$(a^0,a^1)$ from the set ${\mathcal B}\setminus {\rm diag}({\mathcal A})$.
Since $a_i^0\in \R_i(a_i^1)$ holds iff $a_i^1\in \R_i(a_i^0)$
from Assumption \ref{ass:1},
the following two paths are both feasible:
$(a^0,a^1)\rightarrow (a^1,a^0)\rightarrow (a^0,a^1)$, 
$(a^0,a^1)\rightarrow (a^1,a^1)\rightarrow (a^1,a^0)\rightarrow (a^0,a^1)$.
This implies that the period of state $(a^0,a^1)$ is $1$ and
the process $\{P_t^\varepsilon\}$ is proved to be aperiodic.
Hence the process $\{P_t^\varepsilon\}$ is both 
irreducible and aperiodic, which means {\bf (A1)} in Definition \ref{def:3}.

In summary, conditions {\bf (A1)}--{\bf (A3)} 
in Definition \ref{def:3} are satisfied
and the proof is completed.
\end{proof}
From Lemma \ref{lem:1}, the perturbed Markov process
$\{P_t^\varepsilon\}$ is irreducible and hence there exists 
a unique stationary distribution $\mu (\varepsilon)$ for every 
$\varepsilon$. 
Moreover, because $\{P_t^\varepsilon\}$ is a regular perturbation of 
$\{P_t^0 \}$, we see from the former half of Proposition \ref{prop:1} that 
$\lim_{\varepsilon\rightarrow 0+}\mu (\varepsilon)$ exists and 
the limiting distribution $\mu (0)$ is the stationary distribution of $\{P_t^0\}$.

We also have the following lemma on the Markov process $\{P_t^\varepsilon\}$ induced by PHPIP.
\begin{lemma}
\label{lem:2}
Consider the Markov process $\{P_t^\varepsilon\}$ induced by PHPIP
applied to a constrained potential game $\Gamma$.
Then, the recurrent communication classes $\{\H_i\}$
of the unperturbed Markov process $\{P_t^0 \}$ are given by elements of 
$\diag(\A)=\{(a,a)\in \A \times \A|\ a\in {\mathcal A}\}$, namely 
\begin{equation}
 \H_i = \{(a^i,a^i)\},\ i\in 1,\cdots,|\A|.
\label{eq:4.3}
\end{equation}
\end{lemma}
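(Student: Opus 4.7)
The plan is to establish two claims: (i) every $\{(a,a)\}$ with $a \in \A$ is an absorbing state of $\{P_t^0\}$, hence a singleton recurrent communication class, and (ii) every non-diagonal element of $\BB$ lies on a deterministic trajectory that enters $\diag(\A)$, so it cannot belong to any recurrent class. Setting $\varepsilon = 0$ in Algorithm~\ref{alg:1} annihilates the exploration probability and, since $\Delta_i > 0$ whenever $U_i^1 < U_i^2$, the irrational stay-probability $(1-\varepsilon)\kappa\varepsilon^{\Delta_i}$ as well. Hence the unperturbed dynamics is fully deterministic: each agent $i$ sets $a_i(t+1) = a_i(t)$ if $U_i(a(t)) \geq U_i(a(t-1))$ and $a_i(t+1) = a_i(t-1)$ otherwise. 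Claim (i) is then immediate, since at $(a,a)$ every agent has $U_i^1 = U_i^2$ and therefore stays.

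For claim (ii), the core device is a freezing observation: if $a_i(t) = a_i(t-1)$ at some step, then whichever branch of the rule applies yields $a_i(t+1) = a_i(t)$, and by induction $a_i(s) = a_i(t)$ for all $s \geq t$. Since $\BB$ is finite and the dynamics is deterministic, every orbit eventually enters a cycle. Within a cycle, the freezing observation forces each agent to be either permanently frozen (constant action) or genuinely moving at every step, as a single stay-step within the cycle would freeze the agent forever after and break the recurrence. A moving agent must therefore revert at every step, so its action oscillates between the two values $a_i(t-1)$ and $a_i(t)$ with period $2$. Consequently any recurrent cycle has period $1$ (all agents frozen, giving a diagonal state) or period $2$.

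Period-$2$ non-diagonal cycles are ruled out by a short utility contradiction. Suppose $(a^0,a^1) \to (a^1, a^0) \to (a^0,a^1)$ with $a^0 \neq a^1$, and pick any $i$ with $a^0_i \neq a^1_i$ (which exists because $a^0 \neq a^1$). The first transition forces $i$ to revert from $a^1_i$ to $a^0_i$, which requires $U_i(a^1) < U_i(a^0)$; the second transition forces the reverse revert, requiring $U_i(a^0) < U_i(a^1)$. This contradiction eliminates every period-$2$ non-diagonal cycle, so the recurrent classes are exactly the singletons $\{(a,a)\}$, $a \in \A$, giving the enumeration $\H_i = \{(a^i,a^i)\}$ for $i = 1,\ldots,|\A|$. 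The main subtlety is isolating the freezing observation and using it to collapse the possible cycle lengths to at most $2$; notably the argument relies only on the deterministic rule and on Assumption~\ref{ass:1} (reversibility, so that the revert action $a_i(t-1)$ remains in $\R_i(a_i(t))$), and does not invoke the potential structure or Assumption~\ref{ass:2}.
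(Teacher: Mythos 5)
Your proposal is correct, and it reaches the conclusion by a genuinely different route from the paper. The paper first notes (as you do) that every diagonal state is absorbing for $\{P_t^0\}$, and then disposes of the non-diagonal states by induction on the number of agents $n$, splitting into three cases at a state $(a^0,a^1)$: all agents improved (one step to $(a^1,a^1)$), all agents worsened (two steps to $(a^0,a^0)$ via $(a^1,a^0)$), and the mixed case, where the improving agents freeze and the situation ``reduces'' to a game with fewer active agents so that the induction hypothesis applies. Your argument replaces this induction by a dynamical-systems observation: since $\varepsilon=0$ kills both the exploration probability and the irrational stay-probability $(1-\varepsilon)\kappa\varepsilon^{\Delta_i}$ (as $\Delta_i>0$ on that branch), the unperturbed chain is a deterministic map on the finite set $\BB$, so its recurrent classes are exactly its cycles; your freezing lemma (one stay-step freezes an agent forever) forces every agent on a recurrent cycle to be either constant or strictly alternating, collapsing all cycle periods to $1$ or $2$; and the period-$2$ case $(a^0,a^1)\leftrightarrow(a^1,a^0)$ dies by the utility contradiction $U_i(a^1)<U_i(a^0)<U_i(a^1)$ for any agent $i$ with $a_i^0\neq a_i^1$. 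The two proofs share the freezing idea (the paper uses it implicitly in case (iii) to justify the reduction to $n=k+1-l$ agents), but yours makes it the central lemma and avoids the induction entirely, which also makes explicit why the only non-trivial obstruction is a two-cycle; the paper's version, by contrast, tracks more concretely how a non-diagonal state actually flows into $\diag(\A)$. Both correctly use only Assumption~\ref{ass:1} (reversibility, so the revert action is always feasible) and neither needs Assumption~\ref{ass:2}.
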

%
\begin{proof}
Because of the rule at Step 2 of PHPIP, it is clear that 
any state belonging to $\diag( {\mathcal A})$ cannot move to another
state without explorations, which implies that 
all the states in $\diag(\A)$ itself form recurrent communication 
classes of the unperturbed Markov process $\{P_t^0 \}$.

We next consider the states in ${\mathcal B}\setminus \diag({\mathcal A})$
and prove that these states are never included in recurrent communication 
classes of the unperturbed Markov process $\{P_t^0\}$.
Here, we use induction.
We first consider the case of $n=1$.
If $U_1(a_1^1)\geq U_1(a_1^0)$, then 
the transition $(a_1^0,a_1^1)\rightarrow (a_1^1,a_1^1)$ is taken.
Otherwise, a sequence of transitions 
$(a_1^0,a_1^1)\rightarrow (a_1^1,a_1^0) \rightarrow (a_1^0,a_1^0)$ occurs.
Thus, in case of $n = 1$, the state 
$(a_1^0,a_1^1)\in {\mathcal  B}\setminus\diag({\mathcal A})$ is never included 
in recurrent communication classes of $\{P_t^0 \}$.

We next make a hypothesis that there exists a $k \in {\mathbb Z}_+$ such 
that all the states in ${\mathcal B}\setminus \diag({\mathcal A})$ are not 
included in recurrent communication classes of 
the unperturbed Markov process $\{P_t^0\}$ for all $n\leq k$.
Then, we consider the case $n = k+1$, where
there are three possible cases:
\begin{description}
\item[(i)] $U_i(a^1)\geq U_i(a^0)\ {\forall i}\in \V = \{1,\cdots, k+1\}$,
\item[(ii)] $U_i(a^1)< U_i(a^0)\ {\forall i}\in \V = \{1,\cdots, k+1\}$,
\item[(iii)] $U_i(a^1)\geq U_i(a^0)$ for $l$ agents where
$l \in \{2,\cdots, k\}$.
\end{description}
In case (i), the transition 
$(a^0,a^1)\rightarrow (a^1,a^1)$ must occur for $\varepsilon= 0$ and,
in case (ii), the transition
$(a^0,a^1)\rightarrow (a^1,a^0)\rightarrow (a^0,a^0)$ should be selected.
Thus, all the states in ${\mathcal B}\setminus \diag({\mathcal A})$
satisfying (i) or (ii) are 
never included in recurrent communication classes.
In case (iii), at the next iteration, 
all the agents $i$ satisfying $U_i(a^1)\geq U_i(a^0)$
choose the current action.
Then, such agents possess a single action
in the memory and, 
in case of $\varepsilon= 0$, each agent has to choose either of
the actions in the memory.
Namely, these agents never change their actions in all subsequent iterations.
The resulting situation is thus the same as the case of $n = k + 1 - l$.
From the above hypothesis, we can conclude that
the states in case (iii) are also not included in recurrent communication classes. 
In summary, the states in
${\mathcal B}\setminus \diag({\mathcal A})$ are never included 
in the recurrent communication classes of $\{P_t^0 \}$.
The proof is thus completed.
\end{proof}

A feasible path over the process
$\{P_t^\varepsilon\}$ from $z\in {\mathcal B}$ to $z'\in {\mathcal B}$ is especially 
said to be a {\it route}
if both of the two nodes $z$ and $z'$ 
are elements of $\diag(\A) \subset {\mathcal B}$.
Note that a route is a path and hence the resistance of the route
is also given by (\ref{eq:2.4}).
Especially, we define a {\it straight route} as follows, where we use 
the notation
\begin{eqnarray}
&&\E_{single} := \{(z=(a,a),z'=(a',a'))\in \diag(\A)\times\diag(\A)|\ 
\nonumber\\
&&\hspace{5cm}\exists i \in \V \mbox{ s.t. }a_i\in \R_i(a_i'), a_i \neq a_i' \mbox{ and } a_{-i} 
= a_{-i}'\}.
\label{eq:4.4}
\end{eqnarray}
%
\begin{definition}[Straight Route]
\label{def:7}
A route between any two states $z^0=(a^0,a^0)$ and 
 $z^1=(a^1,a^1)$ in ${\diag}(\A)$ such that
$(z^0,z^1)\in \E_{single}$
is said to be a straight route if the path is given by the transitions
on the Markov process $\{P_t^\varepsilon\}$
such that only one agent $i$ changes his action from $a_i^0$ to $a_i^1$
at first iteration and the explored agent $i$ selects the same action 
$a_i^1$ at the next iteration while the other agents choose the same action 
$a_{-i}^0=a_{-i}^1$ during the two steps.
\end{definition}
In terms of the straight route, we have the following lemma.
\begin{lemma}
\label{lem:3}
Consider paths from any state $z^0=(a^0,a^0)\in \diag(\A)$ to any state 
$z^1=(a^1,a^1)\in \diag(\A)$ such that $(z^0,z^1) \in \E_{single}$
over the Markov process $\{P_t^\varepsilon\}$ induced by PHPIP applied to
a constrained potential game $\Gamma$.
Then, under Assumption \ref{ass:2}, the resistance $\lambda(r)$ of 
the straight route $r$ from $z^0$ to $z^1$ is strictly 
smaller than $2$ and the resistance $\lambda(r)$ is minimal
among all paths from $z^0$ to $z^1$.
\end{lemma}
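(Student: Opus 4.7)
My plan is to decompose the argument into two parts: first, compute the resistance of the straight route directly from the transition-probability formula (\ref{eq:4.1}); second, establish minimality by classifying all alternative feasible paths according to their number of exploration events.

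For the resistance computation, I would apply the agent partition $\Lambda_1,\ldots,\Lambda_5$ introduced in the proof of Lemma \ref{lem:1} to each of the two transitions defining the straight route. In the first transition $(a^0,a^0)\to(a^0,a^1)$ only agent $i$ explores, so $i\in \Lambda_1$ contributes $1$ to the resistance, while every other agent $j$ has $U_j(a^0)\geq U_j(a^0)$ trivially and keeps $a_j^0$, so $j\in \Lambda_2$ contributes $0$. In the second transition $(a^0,a^1)\to(a^1,a^1)$, agent $i$ keeps $a_i^1$: he lies in $\Lambda_2$ with contribution $0$ if $U_i(a^1)\geq U_i(a^0)$, and in $\Lambda_4$ with contribution $\Delta_i=U_i(a^0)-U_i(a^1)$ otherwise. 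For each other agent $j$ we have $a_j^0=a_j^1$, so the stay and revert options lead to the same action and the resistance contribution is $0$ regardless of the sign of $U_j(a^1)-U_j(a^0)$. Summing these contributions gives $\lambda(r)=1+\Delta_i^+$ with $\Delta_i^+:=\max\{0,\Delta_i\}$. Assumption \ref{ass:2}, combined with the reversibility in Assumption \ref{ass:1}, ensures $|U_i(a^1)-U_i(a^0)|<1$, so $\Delta_i^+<1$ and hence $\lambda(r)<2$.

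For minimality, I would classify any feasible path $r':z^0\to z^1$ by $k$, its total number of exploration events. Because every exploration contributes at least $1$ to the resistance by (\ref{eq:4.1}), paths with $k\geq 2$ already satisfy $\lambda(r')\geq 2>\lambda(r)$. When $k=1$, the key observation is that the only mechanism by which an agent's action can leave his two memorized values is an exploration; hence any agent $j\neq i$ who explores would need a second exploration to restore his memory so as to reach the terminal action $a_j^1=a_j^0$, contradicting $k=1$. So the single exploration is executed by agent $i$ and must send him directly to $a_i^1$, for otherwise his action would remain confined to $\{a_i^0,a_i^e\}$ with $a_i^e\neq a_i^1$. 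After this exploration the joint action can only oscillate between $a^0$ and $a^1$, and $U_i$ takes only the two values $U_i(a^0)$ and $U_i(a^1)$. A short case check on agent $i$'s memory dynamics starting from $(a_i^0,a_i^1)$ shows that reaching the target memory $(a_i^1,a_i^1)$ either (a) uses the stay option under $U_i^1<U_i^2$ with resistance exactly $\Delta_i^+$ — which is the straight route — or (b) begins with a revert, after which the memory becomes $(a_i^1,a_i^0)$ and only the zero-resistance stay is feasible (since now $U_i^1\geq U_i^2$), returning the memory to $(a_i^0,a_i^0)$ and making $z^1$ unreachable without a further exploration. Thus every $k=1$ path satisfies $\lambda(r')\geq 1+\Delta_i^+=\lambda(r)$.

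The main obstacle will be the $k=1$ case: carefully tracking the joint memory dynamics to rule out exotic sequences of stays and reverts — possibly interleaved with non-action-changing moves by other agents — that might avoid paying the $\Delta_i$ cost when $U_i(a^1)<U_i(a^0)$. The enumeration is made tractable by the fact that, absent further explorations, each agent's action is confined to a two-element set and agent $i$'s memory lives in a very small finite state space, which keeps the case check finite and elementary.
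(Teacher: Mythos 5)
Your overall strategy coincides with the paper's: the resistance of the straight route is computed as $1$ (for agent $i$'s exploration in the first transition) plus $\Delta_i$ when $U_i(a^1)<U_i(a^0)$ (for the irrational stay in the second transition), $\Delta_i<1$ follows from Assumption \ref{ass:2}, and minimality follows because any competing path needs at least two explorations. Your minimality argument is actually more thorough than the paper's, which dismisses all alternatives with the single assertion that paths in which more than one agent takes an exploratory action have resistance greater than $2$; your split on the number $k$ of exploration events, together with the memory-dynamics check for $k=1$, fills in what the paper leaves implicit (including that agent $i$'s single exploration must land exactly on $a_i^1$, and that the revert-then-stay branch returns the process to $z^0$). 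One intermediate claim is false, though: an agent $j\neq i$ who explores does \emph{not} need a second exploration to restore his memory --- after exploring to some $a_j^e$ whose utility is lower, $j$ can revert to $a_j^0$ at zero resistance and then stay, recovering the memory $(a_j^0,a_j^0)$ with no further exploration (only the case where $U_j$ does not decrease traps him at $a_j^e$). The conclusion you need for $k=1$ --- that the unique exploration is agent $i$'s --- survives for the simpler reason already contained in your own ``key observation'': agent $i$'s memory at $z^0$ contains only $a_i^0$, so he can never play $a_i^1\neq a_i^0$ without exploring at least once, and hence a path whose only exploration belongs to some $j\neq i$ can never reach $z^1$. With that one-line repair your proof is complete and slightly strengthens the paper's.
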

%
\begin{proof}
Along with the straight route, only one agent $i$ first 
changes action from $a_i^0$ to 
$a_i^1$, whose probability is given by
\begin {equation}
(1-\varepsilon)^{n-1}\times \frac{\varepsilon}{|\R_i(a_i^0)| - 1}.
\label{eq:4.5}
\end {equation}
It is easy to confirm from (\ref{eq:4.5})
that the resistance of the transition 
$(a^0,a^0)\rightarrow (a^0,a^1)$ is equal to $1$.
We next consider the transition from $(a^0,a^1)$ to $(a^1,a^1)$.
If $U_i(a^1)\geq U_i(a^0)$ is true, the probability of this transition 
is given by $(1-\varepsilon)^{n}$,
whose resistance is equal to $0$.
Otherwise, $U_i(a^1)<U_i(a^0)$ holds and the 
probability of this transition is given by  
$(1-\varepsilon)^{n}\times \kappa\varepsilon^{\Delta_i}$,
whose resistance is $\Delta_i$.
Let us now notice that the resistance $\lambda(r)$ 
of the straight route $r$ is equal to the sum of the resistances 
of transitions $(a^0,a^0)\rightarrow (a^0,a^1)$ and $(a^0,a^1)\rightarrow 
 (a^1,a^1)$ from (\ref{eq:2.5})
and that $\Delta_i < 1$ from Assumption \ref{ass:2}.
We can thus conclude that $\lambda(r)$ is smaller than $2$.
It is also easy to confirm that the resistance of paths
such that more than $1$ agents take
exploratory action should be greater than $2$.
Namely, the straight route gives the smallest
resistance among all paths from $z^0 = (a^0,a^0)$ to $z^1 = (a^1,a^1)$
and hence the proof is completed.
\end{proof}

We also introduce the following notion.
\begin{definition}[$m$-Straight-Route]
\label{def:8}
An $m$-straight-route is a route which passes through 
$m$ vertices in $\diag({\mathcal A})$ and
all the routes between any two of these vertices are straight. 
\end{definition}
In terms of the route,
we can prove the following lemma, which
clarifies a connection
between the potential function and the resistance of the route.
\begin{lemma}\label{lem:4}
Consider the Markov process $\{P_t^\varepsilon\}$ induced by PHPIP
applied to a constrained potential game $\Gamma$.
Let us denote an $m$-straight-route $r$ over $\{P_t^\varepsilon\}$ 
from state $z^0 = (a^0,a^0)\in \diag({\mathcal A})$ to state 
$z^1=(a^1,a^1)\in {\diag}({\mathcal A})$ by
\begin {equation}
z^{(0)} = z^0{\Rightarrow}z^{(1)}\Rightarrow z^{(2)}{\Rightarrow}z^{(3)}
{\Rightarrow}\cdots z^{(m-3)}\Rightarrow z^{(m-2)}{\Rightarrow}
z^{(m-1)} = z^1, 
\label{eq:4.6}
\end {equation}
where $z^{(i)} = (a^{(i)},a^{(i)})\in \diag(\A), i\in \{0,\cdots, m-1\}$ and all the arrows
between them are straight routes.
In addition, we denote its reverse route $r'$ by
\begin {equation}
z^{(0)} = z^0{\Leftarrow}z^{(1)}{\Leftarrow}z^{(2)}{\Leftarrow}z^{(3)}{\Leftarrow}\cdots
{\Leftarrow} z^{(m-3)}{\Leftarrow}z^{(m-2)}{\Leftarrow}z^{(m-1)} = z^1, 
\label{eq:4.7}
\end {equation}
which is also an $m$-straight route from $z^0$ to $z^1$.
Then, under Assumption \ref{ass:2}, if $\phi(a^0) > \phi(a^1)$, 
we have $\lambda(r) > \lambda(r')$.
\end{lemma}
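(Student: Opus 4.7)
The plan is to decompose the total resistance of $r$ (and of $r'$) as a sum of per-segment resistances, use the constrained potential game identity to rewrite each per-segment resistance in terms of the potential $\phi$, and then observe that the forward/reverse contributions telescope to $\phi(a^0)-\phi(a^1)$.

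First, I would focus on a single straight segment $(a^{(i)},a^{(i)}) \Rightarrow (a^{(i+1)},a^{(i+1)})$ in which only one agent, say $j$, changes his action. By Lemma~\ref{lem:3} (its proof actually records the per-transition resistances), the resistance of this segment equals $1+0 = 1$ when $U_j(a^{(i+1)}) \geq U_j(a^{(i)})$ and equals $1+\bigl(U_j(a^{(i)})-U_j(a^{(i+1)})\bigr)$ otherwise; compactly, the segment resistance is $1+\max\{0,\,-\Delta\phi_i\}$, where $\Delta\phi_i := \phi(a^{(i+1)})-\phi(a^{(i)})$, because Definition~\ref{def:1} gives $U_j(a^{(i+1)})-U_j(a^{(i)}) = \phi(a^{(i+1)})-\phi(a^{(i)})$ when only agent $j$ moves. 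The reverse segment has resistance $1+\max\{0,\,\Delta\phi_i\}$ by the same argument applied in the opposite direction.

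Next, I would sum over the $m-1$ segments using (\ref{eq:2.5}):
\begin{equation*}
\lambda(r) = (m-1) + \sum_{i=0}^{m-2}\max\{0,-\Delta\phi_i\}, \qquad
\lambda(r') = (m-1) + \sum_{i=0}^{m-2}\max\{0,\Delta\phi_i\}.
\end{equation*}
Using the elementary identity $\max\{0,-x\}-\max\{0,x\} = -x$ termwise, I get
\begin{equation*}
\lambda(r)-\lambda(r') = -\sum_{i=0}^{m-2}\Delta\phi_i = \phi(a^{(0)})-\phi(a^{(m-1)}) = \phi(a^0)-\phi(a^1),
\end{equation*}
the middle equality being a telescoping sum. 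Under the hypothesis $\phi(a^0)>\phi(a^1)$, this difference is strictly positive, yielding $\lambda(r)>\lambda(r')$ as claimed.

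There is no real obstacle here beyond two bookkeeping items I would be careful about. One is to make explicit why Assumption~\ref{ass:2} is needed: it guarantees $\Delta_j<1$ per segment, so each segment resistance truly equals $1+\max\{0,-\Delta\phi_i\}$ (rather than being dominated by the alternative two-exploration path), which is exactly what Lemma~\ref{lem:3} already established; I would cite Lemma~\ref{lem:3} to import this cleanly. The other is to notice that along a straight route the changing agent $j$ at step $i$ can depend on $i$, but this does not affect the argument because the potential-game identity (\ref{eq:2.1}) is agent-indexed and applies uniformly; only $\Delta\phi_i$ appears in the final bookkeeping.
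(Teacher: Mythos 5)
Your proof is correct and follows essentially the same route as the paper: per-segment resistances of $1$ or $1+\Delta$ imported from Lemma~\ref{lem:3}, the potential-game identity (\ref{eq:2.1}) to convert utility drops into potential drops, and a telescoping sum yielding $\lambda(r)=\lambda(r')+\phi(a^0)-\phi(a^1)$, which is exactly the paper's equation (\ref{eq:4.10}). The only difference is cosmetic: you package the case split via $\max\{0,\pm\Delta\phi_i\}$ and the identity $\max\{0,-x\}-\max\{0,x\}=-x$, whereas the paper separately enumerates the $p$ forward and $q$ reverse segments with resistance exceeding $1$.
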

%
\begin{proof}
We suppose that the route $r$ contains $p$ straight routes with
resistance greater than $1$ and $r'$ contains $q$ straight routes 
with resistance greater than $1$. 
Let us now denote the explored agent along with the route
$z^{(i)}\Rightarrow z^{(i+1)}$
by $j_i$ and that with $z^{(i)}\Leftarrow z^{(i+1)}$ by $j'_{i}$. 
From the proof of Lemma \ref{lem:3},
the resistance of the route $z^{(i)}\Rightarrow z^{(i+1)}$ 
should be exactly equal to $1$
(in case of $U_{j_i}(a^{(i+1)})\geq U_{j_i}(a^{(i)})$)
or equal to $1 + \Delta_{j_i} \in (1, 2)$ (in case of 
$U_{j_i}(a^{(i+1)})<U_{j_i}(a^{(i)})$).
From (\ref{eq:2.1}), the following equation holds.
\begin {equation}
\Delta_{j_i} = U_{j_i}(a^{(i)}) - U_{j_i}(a^{(i+1)})
= \phi(a^{(i)}) - \phi(a^{(i+1)}) = 
U_{j_i'}(a^{(i)}) - U_{j_i'}(a^{(i+1)}) = - \Delta_{j_i'}.
\label{eq:4.8}
\end {equation}
Namely, one of the resistances of the straight routes 
$z^{(i)} \Rightarrow z^{(i+1)}$ and $z^{(i+1)} \Leftarrow z^{(i)}$
is exactly $1$ and the other is greater than $1$
except for the case that $U_i(a^{(i+1)}) = U_i(a^{(i)})$
in which the resistances are both equal to $1$.
An illustrative example of the relation is given as follows, where
the numbers put on arrows are 
the resistances of the routes.
\begin{equation}
z^{(0)} = z^0 \stackrel{1+\Delta_{j_0}}{\Rightarrow} z^{(1)} 
\stackrel{1}{\Rightarrow} z^{(2)} 
\stackrel{1 + \Delta_{j_1}}{\Rightarrow} z^{(3)}
\stackrel{1}{\Rightarrow} \cdots
\stackrel{1}{\Rightarrow}z^{(m-3)}
\stackrel{1}{\Rightarrow}z^{(m-2)}
\stackrel{1 + \Delta_{j_{m-2}}}{\Rightarrow}z^{(m-1)} = z^1 
\nonumber
\end{equation}
\begin {equation}
z^{(0)} = z^0\stackrel{1}{\Leftarrow}z^{(1)}
\stackrel{1 + \Delta_{j'_{1}}}{\Leftarrow}z^{(2)}
\stackrel{1}{\Leftarrow }z^{(3)}
\stackrel{1 + \Delta_{j'_{3}}}{\Leftarrow} \cdots
\stackrel{1 + \Delta_{j_{m-4}'}}{\Leftarrow}z^{(m-3)}
\stackrel{1 + \Delta_{j_{m-3}'}}{\Leftarrow}z^{(m-2)}
\stackrel{1}{\Leftarrow }z^{(m-1)} = z^1 
\nonumber
\end {equation}
Namely, the inequality $p + q \leq m-1$ holds true.
Let us now collect all the $\Delta_{j_i}$ such that
the resistance
of $z^{(i)}\Rightarrow z^{(i+1)}$ is greater than $1$
and number them as
$\Delta_1, \Delta_2, \cdots, \Delta_{p}$.
Similarly, we define $\Delta'_1, \Delta'_2, \cdots, \Delta'_{q}$
for the reverse route $r'$.
Then, from equations in (\ref{eq:4.8}), we obtain
\begin{equation}
\Delta_1 + \Delta_2 + \cdots + \Delta_{p} - 
(\Delta'_1 + \Delta'_2 + \cdots + \Delta'_{q}) 
= \phi(a^0)-\phi(a^1).
\label{eq:4.9}
\end{equation}
Note that (\ref{eq:4.9}) holds
even in the presence of pairs $(a^{(i)},a^{(i+1)})$ such that
$U_{j_i}(a^{(i+1)}) = U_{j_i}(a^{(i)})$.
Since $\Delta_1+\cdots+\Delta_p=\lambda(r)-(m-1)$ 
and $\Delta'_1+\cdots+\Delta'_q=\lambda(r')-(m-1)$ 
from (\ref{eq:2.5}), we obtain
\begin {equation}
\lambda(r) = \lambda(r') + \phi(a^0) - \phi(a^1).
\label{eq:4.10}
\end {equation}
It is straightforward from (\ref{eq:4.10}) to prove
the statement in the lemma. 
\end{proof}

Let us form the weighted digraph $G$ 
over the recurrent communication classes
for the Markov process $\{P_t^\varepsilon\}$
induced by PHPIP as in Subsection \ref{sec:2.2},
where the weight $w_{lk}$ of each edge $(H_l,H_k)$
is equal to the minimal resistance $\chi_{lk}$ among all the paths connecting 
two recurrent communication classes $H_l$ and $H_k$.
From Lemma \ref{lem:2}, the nodes of the graph $G$ are
given by each element of the set $\diag(\A)$
and hence $G = (\diag(\A),\E,\W), \E \subseteq \diag(\A) \times \diag(\A)$.
Since all the recurrent communication classes have only one element as in (\ref{eq:4.3}),
the weight $w_{lk}$ for any two states $z^l, z^k\in \diag(\A)$
is simply given by the path with minimal resistance 
among all paths from $z^l$ to $z^k$.
In addition, Lemma \ref{lem:3} proves that
if $(z^l,z^k)\in \E_{single}$, the weight $w_{lk} = \chi_{lk}$ 
is given by the resistance of the straight route
from $z^l$ to $z^k$.

\begin{figure}
\centering
\includegraphics[width=10cm]{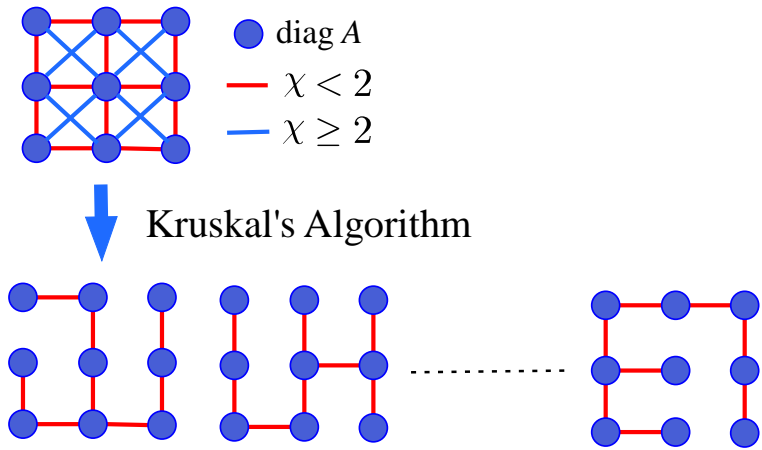}
\caption{Image of Kruskal's Algorithm}
\label{fig:1}
\end{figure}

Let us focus on $l$-trees over $G$ whose root
is a state $z^l\in \diag(\A)$. 
Recall now that the resistance of the tree
is the sum of the weights of all the edges
constituting the tree as defined in Subsection \ref{sec:2.2}.
Then, we have the following lemma in terms of the 
stochastic potential of $z^l$, which is the minimal
resistance among all $l$-trees in $\G(l)$.
\begin{lemma}
\label{lem:6}
Consider the weighted directed graph $G$ constituted
from the Markov process $\{P_t^\varepsilon\}$
induced by PHPIP applied to
a constrained potential game $\Gamma$.
Let us denote by $\T = (\diag(\A),\E_l,\W)$ the $l$-tree 
giving the stochastic potential of $z^l \in \diag(\A)$.
If Assumptions \ref{ass:1} and \ref{ass:2} are satisfied, 
then the edge set $\E_l$ must be a subset of $\E_{single}$.
\end{lemma}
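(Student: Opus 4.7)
The plan is to carry out a Kruskal-style exchange argument (as suggested by Figure~\ref{fig:1}), driven by a weight dichotomy. First I would show that every edge in $\E_{single}$ has weight strictly less than $2$ (this is immediate from Lemma~\ref{lem:3}), and every edge in $\E\setminus\E_{single}$ has weight at least $2$.

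For the second bound, take $(z^j,z^k)\notin\E_{single}$ with $z^j=(a^j,a^j)$ and $z^k=(a^k,a^k)$. By the definition in \eqref{eq:4.4}, either $a^j$ and $a^k$ differ in at least two coordinates, or they differ in exactly one coordinate $i$ but $a_i^k\notin\R_i(a_i^j)$. In the first case, any trajectory of $\{P_t^\varepsilon\}$ from $z^j$ to $z^k$ must include at least two distinct exploratory decisions (each contributing resistance $\geq 1$, by the computation in the proof of Lemma~\ref{lem:1}), so $w_{jk}\geq 2$. In the second, the feasibility item of Assumption~\ref{ass:1} forces agent $i$ to traverse at least one intermediate action, again demanding at least two explorations and hence $w_{jk}\geq 2$.

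Now suppose for contradiction that $\T$ contains an edge $e=(z^j,z^k)\in\E_l\setminus\E_{single}$, so $w(e)\geq 2$. Removing $e$ partitions $\T$ into sub-arborescences $T_1\ni z^j$ and $T_2\ni z^l, z^k$. By the feasibility and reversibility items of Assumption~\ref{ass:1} applied one coordinate at a time, the undirected graph induced by $\E_{single}$ on $\diag(\A)$ is connected, so there exists an edge $(u,v)\in\E_{single}$ with $u\in T_1$, $v\in T_2$, and $w(u,v)<2$. I would then reconnect the tree by reversing the directed ancestral path $u=u_0, u_1=\pi(u_0),\ldots,u_r=z^j$ inside $T_1$ and attaching $u$ to $v$ via the new crossing edge. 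One checks by direct inspection that every non-root node retains exactly one outgoing edge and that no directed cycle is created, so the result is again a valid $l$-arborescence.

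The main obstacle is verifying that this exchange strictly reduces the total resistance. Dropping $e$ and inserting $(u,v)$ contributes $w(u,v)-w(e)<0$. The reversed single edges $u_i\to u_{i+1}\mapsto u_{i+1}\to u_i$ contribute, by the relation between forward and backward straight-route resistances established in Lemma~\ref{lem:4} (in particular \eqref{eq:4.10}), individual changes $\phi(a^{u_{i+1}})-\phi(a^{u_i})$ that telescope to $\phi(a^{z^j})-\phi(a^u)$. If any ancestral edge in $T_1$ happens itself to be non-single, I would first apply the same exchange to an \emph{innermost} non-single edge (one whose $T_1$-side subtree contains no further non-single edges), so that when the argument is finally applied to $e$ the reversed path consists entirely of single edges and the telescoping formula is legitimate. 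A careful choice of the crossing edge $(u,v)$ (for instance one that minimizes $w(u,v)+\phi(a^{z^j})-\phi(a^u)$ among all $\E_{single}$-edges across the cut) then makes the net weight change strictly negative, contradicting the minimality of $\T$. Hence $\E_l\subseteq\E_{single}$.
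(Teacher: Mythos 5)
Your opening move coincides with the paper's: the weight dichotomy (every edge of $\E_{single}$ has resistance in $[1,2)$ by Lemma \ref{lem:3}, every other edge of $G$ has resistance at least $2$ because at least two exploratory decisions are unavoidable) is exactly the paper's first step, and your case analysis justifying the lower bound is in fact more explicit than what the paper writes. Where you diverge is in converting the dichotomy into the conclusion. The paper uniformizes the weights (all single edges $\mapsto w_s<2$, all others $\mapsto w_d\geq 2$) and appeals to Kruskal's algorithm; you instead attempt an explicit edge exchange on the directed $l$-tree, and in doing so you correctly surface the point that any rigorous version of this argument must confront: the object is an arborescence, so deleting a non-single edge $e=(z^j,z^k)$ and reattaching the severed subtree $T_1$ through a crossing single edge $(u,v)$ forces a reversal of the tree path from $u$ to $z^j$, and by \eqref{eq:4.10} those reversals change the weight by an amount telescoping to $\phi(a^{z^j})-\phi(a^{u})$.

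The step that is supposed to close your argument, however, is asserted rather than proved, and it is a genuine gap. The net change of your exchange is $w(u,v)-w(e)+\phi(a^{z^j})-\phi(a^{u})$. The first part $w(u,v)-w(e)$ is negative but its magnitude can be arbitrarily small (e.g.\ $w(e)=2$ and $w(u,v)$ close to $2$), whereas the telescoped term $\phi(a^{z^j})-\phi(a^{u})$ is controlled only by Assumption \ref{ass:2} applied once per edge of the reversed path, hence can be as large as roughly the depth of $T_1$. Nothing guarantees that minimizing $w(u,v)+\phi(a^{z^j})-\phi(a^{u})$ over crossing single edges produces a value below $w(e)$: if every vertex $u$ of $T_1$ that has a single edge into $T_2$ sits much lower on the potential than $z^j$ does --- which happens when $T_1$ is a long chain descending in $\phi$ whose only single-edge exit to $T_2$ is at the bottom --- then every exchange of the form you describe \emph{increases} the total resistance, and you have exhibited no alternative modification. (In fairness, the paper's own proof buries the same difficulty in the unjustified claim that uniformizing the weights ``does not change the elements of the edge set $\E_l$'' and in the appeal to Kruskal's algorithm, which is an undirected-spanning-tree tool; you have at least located the delicate point.) To repair the argument one needs a surgery that is not purely edge-local, e.g.\ one exploiting that $w(e)=\chi_{jk}$ is by definition the minimum resistance over \emph{all} Markov-chain paths from $z^j$ to $z^k$ --- so it already dominates the cost of the explorations any single-edge detour would spend --- or a direct comparison of the two candidate trees as wholes; as written, the final inequality does not follow.
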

%
\begin{proof}
The edges of $G$, denoted by $\E$, are divided into two classes: 
$\E_s := \E_{single}$ and $\E_d := \E\setminus \E_s$.
From Lemma \ref{lem:3}, the weights of the edges in $\E_s$ 
are smaller than $2$.
We next consider the weights of the edges in $\E_d$.
Because of the nature of PHPIP, any agent cannot change his 
action to another one without explorations when $z(t) \in \diag(\A)$,
and hence exploration should be executed more than twice
in order that the transition along with an edge in $\E_d$ occurs.
This implies that the weights of edges in $\E_d$ should be greater than $2$.

Hereafter, we simply rewrite the weights of the edges $\E_s$
by $w_s (< 2)$ and those of $\E_d$ by 
$w_d (\geq 2)$ and
build the minimal resistance tree with root $z^l$ over 
this simplified graph. 
Note that this simplification does not change 
the elements of the edge set $\E_l$.
It should be noted that from Assumption \ref{ass:1}
all recurrent communication classes ($\diag({\mathcal A})$) can be connected 
by passing only through straight routes.
From the procedure of Kruskal's Algorithm, 
edges with resistances $w_d$ are never 
chosen as edges of the minimal tree as illustrated in Fig. \ref{fig:1}.
Thus, the tree giving the stochastic potential must consist only of
the edges in $\E_s$, which completes the proof.
\end{proof}

We are now ready to state the following proposition
on the stochastically stable states (Definition \ref{def:4}) for 
the Markov process $\{P_t^\varepsilon\}$.
\begin{proposition}
\label{prop:3}
Consider $\{P_t^\varepsilon\}$ induced by PHPIP applied to
a constrained potential game $\Gamma$. 
If Assumptions \ref{ass:1} and \ref{ass:2} are satisfied, then the 
stochastically stable states are included in 
$\diag(\zeta(\Gamma))$, 
with the set of the optimal Nash equilibria $\zeta(\Gamma)$.
\end{proposition}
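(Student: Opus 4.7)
The plan is to invoke Proposition~\ref{prop:1} together with Lemma~\ref{lem:2} to reduce the statement to a comparison of stochastic potentials across the singleton recurrent classes $\{(a,a)\}$, $a\in\A$. Concretely, it suffices to show that for every $z^l=(a^l,a^l)$ with $a^l\notin\zeta(\Gamma)$ and every $z^*=(a^*,a^*)$ with $a^*\in\zeta(\Gamma)$ (so $\phi(a^*)>\phi(a^l)$), the stochastic potential of $z^l$ strictly exceeds that of $z^*$; then Proposition~\ref{prop:1} immediately rules out $z^l$ as a stochastically stable state.

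To compare stochastic potentials, I would fix a minimum-resistance $l$-tree $\T_l\in\G(l)$. By Lemma~\ref{lem:6} every edge of $\T_l$ lies in $\E_{single}$, and by Lemma~\ref{lem:3} each of its weights coincides with the resistance of a straight route. Since $\T_l$ is rooted at $z^l$, it contains a unique directed path $z^*=z^{(0)}\to z^{(1)}\to\cdots\to z^{(m-1)}=z^l$, and concatenating the straight routes along this path yields an $m$-straight-route $r$ whose resistance $\lambda(r)$ equals the sum of the weights of these edges. I would then construct a candidate $*$-tree $\T^*$ by reversing exactly the edges on this path while keeping all other edges of $\T_l$ unchanged. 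Reversibility in Assumption~\ref{ass:1} guarantees that each reversed edge still lies in $\E_{single}$, so it is a legitimate edge of the graph $G$ and its weight equals the resistance of the reversed straight route. A standard arborescence argument confirms $\T^*\in\G(*)$: vertices off the reversed path retain their original subpath in $\T_l$ and then follow the reversed segment into the new root $z^*$.

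The resistance of $\T^*$ therefore differs from that of $\T_l$ only by replacing $\lambda(r)$ with $\lambda(r')$, where $r'$ is the reverse $m$-straight-route from $z^l$ to $z^*$. Applying Lemma~\ref{lem:4} with $\phi(a^*)>\phi(a^l)$ gives $\lambda(r)>\lambda(r')$, so the resistance of $\T^*$ is strictly smaller than the stochastic potential of $z^l$. Since the stochastic potential of $z^*$ is bounded above by the resistance of any $*$-tree, this forces the stochastic potential of $z^*$ to be strictly smaller than that of $z^l$, completing the reduction.

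The main obstacle I anticipate is the combinatorial bookkeeping around the tree-reversal step: one must verify both that reversing the directed edges along a single root-to-vertex path of an arborescence produces an arborescence rooted at the new vertex (no cycles are introduced and every node still has a unique directed path to $z^*$) and that the weight assigned to each reversed edge in $\T^*$ is precisely the straight-route resistance that feeds into Lemma~\ref{lem:4}. Once this step is pinned down, Lemma~\ref{lem:4} converts the strict potential gap $\phi(a^*)-\phi(a^l)>0$ directly into a strict resistance gap, and the proposition follows.
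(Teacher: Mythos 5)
Your proposal is correct and follows essentially the same route as the paper's proof: reduce via Proposition~\ref{prop:1} and Lemmas~\ref{lem:1}--\ref{lem:2} to comparing stochastic potentials, take the minimum-resistance tree rooted at a non-optimal state, identify the unique (by Lemma~\ref{lem:6}) $m$-straight-route from the optimal state to the root, reverse it to obtain a tree rooted at the optimal state, and apply Lemma~\ref{lem:4} to get the strict resistance gap. The only difference is that you spell out the arborescence-reversal bookkeeping that the paper leaves implicit, which is a welcome addition rather than a departure.
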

%
\begin{proof}
From Proposition \ref{prop:1}, Lemmas \ref{lem:1} and \ref{lem:2}, it is sufficient
to prove that the states in $\diag (\A)$
with the minimal stochastic potential over $G$
are included in $\zeta(\Gamma)$.

Let us introduce the notations $z_{nonopt} = (a_{nonopt},a_{nonopt})\in \diag (\A)$ 
 with a non optimal action $a_{nonopt}$ 
 and $z_{opt}=(a_{opt},a_{opt})\in \diag({\mathcal A})$ with an
 optimal Nash equilibrium $a_{opt}$.
If $z_{nonopt}$ is the root of a tree $T$, there exists a unique route 
from $z_{opt}$ to $z_{nonopt}$ over $T$.
From Lemma \ref{lem:6}, the route $r$ is an $m$-straight-route for some $m$.
Now, we can build a tree $T'$ with root $z_{opt}$ such that
only the route $r$ is replaced by its reverse route $r'$
(Fig. \ref{fig:2}).
Then, we have $\lambda(r)>\lambda(r')$
from Lemma \ref{lem:4} since $\phi(a_{opt}) > \phi(a_{nonopt})$.
Thus, the resistance of $T'$ is smaller than that of $T$
and the stochastic potential of $z_{opt}$ is smaller than 
the resistance of $T'$.
The statement holds regardless of the selection of $a_{nonopt}$.
This completes the proof.
\end{proof}
%

\begin{figure}
\centering
\includegraphics[width=13cm]{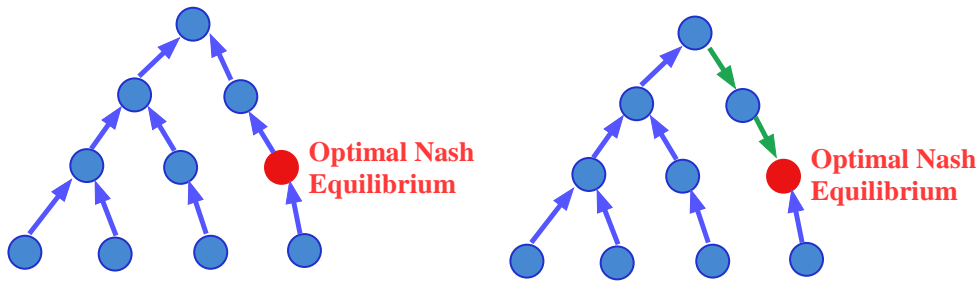}
\caption{Resistance Trees (the left tree should have a greater resistance than the right)}
\label{fig:2}
\end{figure}

We next consider PIPIP with time-varying $\varepsilon(t)$ and
prove strong ergodicity of $\{P_t^\varepsilon\}$. 
%
\begin{lemma}\label{lem:8}
The Markov process $\{P_t^\varepsilon\}$ induced by PIPIP 
applied to a constrained potential game $\Gamma$ is strongly ergodic.
\end{lemma}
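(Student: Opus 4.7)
The plan is to invoke Proposition~\ref{prop:2} and verify its three hypotheses (B1)--(B3) for the inhomogeneous chain generated by PIPIP with $\varepsilon(t)=t^{-1/(n(D+1))}$. At each fixed time $t$, the one-step transition matrix $P_t$ of PIPIP coincides with that of PHPIP run at exploration rate $\varepsilon(t)$. Therefore Lemma~\ref{lem:1} applies at every $t$: for each $\varepsilon(t)\in(0,\varepsilon^*]$ (which holds for all sufficiently large $t$) the matrix $P_t$ is irreducible and aperiodic on $\BB$, so it admits a unique stationary distribution $\mu^{t}$ that is a left eigenvector with eigenvalue $1$. This disposes of (B2) immediately, and reduces the problem to (B1) weak ergodicity and (B3) summable variation of the $\mu^{t}$.

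For (B1) I would use the standard sufficient condition based on Dobrushin's contraction coefficient: it suffices to exhibit a block length $T$ and a positive function $\alpha(t)$ such that, for every pair of states $z,z'\in\BB$, the block transition probability $P(t,t+T)_{z z'}$ is uniformly bounded below by $\alpha(t)$, with $\sum_t \alpha(t)=\infty$. Here $T$ is dictated by Assumption~\ref{ass:1}: by feasibility and reversibility, any $a_i\in\A_i$ can be reached from any other within $D$ single-agent exploratory moves, so any $z'=(a',a'')\in\BB$ is reachable from any $z\in\BB$ in at most $T:=n(D+1)$ steps by an explicit sequence that uses at most $n(D+1)$ exploratory choices and otherwise ``stays.'' Multiplying together the per-step probabilities along such a realising trajectory, and using that the surviving factors of the form $(1-\varepsilon(t))$, $1/(|\R_i(a_i)|-h_i)$, and $\kappa\varepsilon(t)^{\Delta_i}$ are all bounded below by constants times $\varepsilon(t)^{n(D+1)}$, yields a lower bound of order $\varepsilon(t)^{n(D+1)}=1/t$. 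Since $\sum 1/t = \infty$, weak ergodicity follows.

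For (B3) I would argue that $\mu^{t}$ is a rational function of $\varepsilon(t)$ with no pole at $\varepsilon(t)>0$, using the explicit form of the entries of $P_t$ in \eqref{eq:4.1}. Consequently there is a constant $K$ such that $\|\mu^{t+1}-\mu^{t}\|_1 \le K\,|\varepsilon(t+1)-\varepsilon(t)|$. But $\varepsilon(t)=t^{-1/(n(D+1))}$ is monotone decreasing with $|\varepsilon(t+1)-\varepsilon(t)|=O(t^{-1-1/(n(D+1))})$, whose sum over $t$ is finite. Hence (B3) holds, and the limit $\mu^{*}:=\lim_{t\to\infty}\mu^{t}$ exists; by Proposition~\ref{prop:1} and Proposition~\ref{prop:3} it is supported in $\diag(\zeta(\Gamma))$ (this last identification will be used for Theorem~\ref{thm:1}, but is not required here).

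The main obstacle is (B1): one must exhibit a realisable trajectory of bounded length $T$ from an \emph{arbitrary} $z\in\BB$ to an \emph{arbitrary} $z'\in\BB$ whose per-step probabilities can be honestly lower bounded. Care is needed because the probability formula splits into five cases $\Lambda_1,\dots,\Lambda_5$ depending on whether the utility went up or down, and the ``irrational'' factor $\kappa\varepsilon^{\Delta_i}$ must be kept away from both $0$ and $1$; here Assumption~\ref{ass:2} ($\Delta_i<1$) guarantees $\varepsilon^{\Delta_i}\ge\varepsilon$, which is exactly what is needed to keep the lower bound on the order of $\varepsilon^{n(D+1)}=1/t$. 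Once this block-wise lower bound is in place, the remaining verifications are routine.
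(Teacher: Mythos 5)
Your proposal follows the same route as the paper: invoke Proposition~\ref{prop:2}, get (B2) from per-time irreducibility/aperiodicity (Lemma~\ref{lem:1}), prove (B1) via a block transition lower bound of order $\varepsilon(t)^{n(D+1)}=1/t$ whose sum diverges, and handle (B3) by summable variation of the stationary vectors $\mu^{t}$. The paper's own proof is terser: for (B1) it establishes the \emph{uniform} one-step bound $P^{\varepsilon(t)}_{z^1z^2}\geq(\varepsilon(t)/(C-1))^n$ valid for every feasible transition once $t\geq t_0$ (the inequalities in \eqref{eq:4.12} are engineered precisely so that every branch --- exploration, rational stay, irrational stay, revert --- costs at most one power of $\varepsilon$ per agent), and then defers the rest, including all of (B2)--(B3), to \cite{ZM09a}. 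Two points in your version deserve care. First, in (B1) you route agents one at a time over $T=n(D+1)$ steps and claim the non-exploratory factors are constant-bounded; this is true only because a ``parked'' agent $j$ with coincident memory slots $a_j^1=a_j^2$ returns to the same action under \emph{both} the irrational-stay and the revert branches, so its total stay probability is exactly $1-\varepsilon$ irrespective of whether $U_j$ dropped. Without that observation, bystander factors of the form $\kappa\varepsilon^{\Delta_j}$ would leak extra powers of $\varepsilon$, the block bound would fall below order $1/t$, and the harmonic divergence --- hence weak ergodicity --- would be lost. The paper's uniform per-step bound over $D+1$ synchronous steps sidesteps this bookkeeping entirely. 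Second, in (B3) the entries of $P_t$ contain terms $\varepsilon^{\Delta_i}$ with real exponents $\Delta_i\in(0,1)$, so $\mu^{t}$ is \emph{not} a rational function of $\varepsilon(t)$, and a uniform Lipschitz bound $\|\mu^{t+1}-\mu^{t}\|_1\leq K|\varepsilon(t+1)-\varepsilon(t)|$ is not justified as $\varepsilon\to 0$ (note $\frac{d}{d\varepsilon}\varepsilon^{\Delta_i}=\Delta_i\varepsilon^{\Delta_i-1}\to\infty$). The standard repair, which is what \cite{ZM09a} uses, is that each entry of $\mu(\varepsilon)$ is a ratio of finite sums of products of monotone elementary functions of $\varepsilon$, hence has finitely many local extrema and bounded variation on $(0,\varepsilon(t_0)]$; monotonicity of $t\mapsto\varepsilon(t)$ then gives $\sum_t\|\mu^{t+1}-\mu^{t}\|_1<\infty$ without any Lipschitz estimate.
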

%
\begin{proof}
We use Proposition \ref{prop:2} for the proof.
Conditions {\bf (B2)}, {\bf (B3)} in Proposition \ref{prop:2} 
can be proved in the same way as \cite{ZM09a}.
We thus show only the satisfaction of Condition {\bf (B1)}.
As in (\ref{eq:4.1}), the probability of transition 
$z^1\rightarrow z^2$ is given by
\begin{eqnarray}
P_{z^1z^2}^\varepsilon &=& 
\prod_{i\in \Lambda_1} \frac{\varepsilon}{|\R_i(a_i^1)| - 1}
\times \prod_{i\in \Lambda_2} (1 - \varepsilon) \times 
\prod_{i\in \Lambda _3} \frac{\varepsilon}{|\R_i(a_i^1)| - h_i} \times  
\nonumber\\
&&\times \prod_{i\in \Lambda_4} (1 - \varepsilon) \kappa
\varepsilon^{\Delta_i} \times \prod_{i\in \Lambda_5} 
(1 - \varepsilon)(1 - \kappa\varepsilon^{\Delta_i}).
\label{eq:4.11}
\end{eqnarray}
Since $\varepsilon(t)$ is strictly decreasing, there is $t_0\geq 1$ 
 such that $t_0$ is the first time when
\begin{equation}
(1-\varepsilon(t))(1-\kappa\varepsilon(t) ^{\Delta _i })\geq 
 \frac{\varepsilon(t)}{C -1},\
1- \varepsilon(t) \geq \frac{\varepsilon(t) ^{(1-\Delta_i)}}{\kappa(C -1)}
\label{eq:4.12}
\end{equation}
holds.
Note that the existence of $\varepsilon$ satisfying (\ref{eq:4.12}) 
is guaranteed from the condition (\ref{eq:3.4}).
For all $t \geq t_0$, we have
\begin {equation}
P_{z^1z^2}^\varepsilon(t) \geq 
\left(\frac{\varepsilon(t)}{C -1}\right)^n.
\label{eq:4.13}
\end {equation}
The remaining part of the proof is the same as \cite{ZM09a}
and omit it in this paper. 
\end{proof}

We are now ready to prove Theorem \ref{thm:1}.
From Lemma \ref{lem:8}, the distribution $\mu (\varepsilon(t))$ 
converges to the unique distribution $\mu ^*$  from any initial state.
%
%
In addition, we also have $\mu ^*=\mu (0) = {\lim}_{\varepsilon\rightarrow 0}\mu(\varepsilon)$
from ${\lim}_{t\rightarrow \infty}\varepsilon(t) = 0$.
We have already proved from Propositions \ref{prop:1} and \ref{prop:3}
that any state $z$ satisfying $\mu_z(0) > 0$ must be included 
in $\diag(\zeta(\Gamma))$.
Therefore, 
\[
 {\lim}_{t\rightarrow \infty} {\rm Prob}[z(t) \in \diag(\zeta(\Gamma))] = 1,
\]
is proved, which completes the proof of Theorem \ref{thm:1}.
Theorem \ref{thm:2} is also proved from Proposition \ref{prop:1}, Lemma \ref{lem:1} 
and Proposition \ref{prop:3}.

\section{Application to Sensor Coverage Problem}
\label{sec:5}

In this section we demonstrate the effectiveness of the 
proposed learning algorithm PIPIP through experiments of
the sensor coverage 
problem investigated e.g. in \cite{CMKB04,LC05,CZ08} whose
objective is to cover a mission space efficiently 
using distributed control strategies.
In particular, the problem of this section is formulated based on \cite{ZM09a}
with some modifications.

\subsection{Problem Formulation}
\label{sec:5.1}

We suppose that the mission space to be covered
is given by ${\mathcal Q}^c \subset {\mathbb R}^2$
and that a density function $W^c(q),\ q\in {\mathcal Q}^c$
is defined over ${\mathcal Q}^c$.
In particular, to constitute a game in the form of the previous
sections, 
we also prepare a discretized mission space ${\mathcal Q}$ 
consisting of a finite number of points in ${\mathcal Q}^c$.
Accordingly, we also define the discretized version of the density
$W(q),\ q\in {\mathcal Q}$ such that $W(q) = W^c(q)\ {\forall q}\in {\mathcal Q}$.

In the problem, the position of agent $i$ in the mission space 
${\mathcal Q}$
is regarded as the action $a_i$ to be determined, and hence
the action set $\A_i$ is given by a subset of ${\mathcal Q}$ for all 
$i\in {\V}$. 
Namely, each agent $i$ chooses his action $a_i$ from the finite set 
$\A_i \subseteq {\mathcal Q}$ at each iteration and move toward the corresponding point.

Suppose now that each sensor has a limited sensing radius $r_m$ and
that agent $i$ located at $a_i \in {\mathcal Q}$
may sense an event at $q\in {\mathcal Q}$ iff 
$q \in \D(a_i) := \{q \in {\mathcal Q}|\ \|q - a_i\| \leq r_m\}$.
We also denote by $n_q(a)$ the number of agents 
such that $q \in \D(a_i)$
when agents take the joint action $a$.
Then, we define the function
\begin{equation*}
\phi(a) = \sum_{q\in {\mathcal Q}}\sum^{n_q(a)}_{l=1}\frac{W(q)}{l}dq.
\end{equation*}
This function means, as $n_q(a)$ increases,
the sensing accuracy at 
$q\in {\mathcal Q}$ improves
but the increment decreases, which captures the
characteristics of the sensor coverage problem. 
Note that the authors in \cite{ZM09a} 
take account of energy consumption of sensors
in addition to coverage performance
and claim that the function $\phi$ cannot be a performance measure.
However, we do not consider the energy consumption and what is the best
selection of the performance measure depends on the 
subjective views of designers.
We thus identify maximization of $\phi$
with the global objective of the group letting
$\phi$ be the potential function.

Let us now introduce the utility function 
\begin{equation*}
U_i(a) = \sum_{q\in \D(a_i)}\frac{W(q)}{n_q(a)}.
\end{equation*}
Then, equation (\ref{eq:2.1})
holds for the above potential function $\phi$ \cite{ZM09a}
and hence a potential game 
is constituted.
It is also easy to confirm that the utility $U_i(a)$ can be locally computed 
if we assume feedbacks of $W_q,\ q \in \D(a_i)$ from environment and 
of the selected actions $a_j,\ j\neq i$ 
only from neighboring agents specified by
the $2r_m$-disk proximity communication graph \cite{BCM09}.

\subsection{Objectives}
\label{sec:5.2}

In this section, we run two experiments whose
objectives are listed below.
\begin{itemize}
\item Demonstration of effectiveness: 
Theorems \ref{thm:1} and \ref{thm:2} assure statements after infinitely long time but
it is required in practice that the algorithm works in finite time.
The first objective is thus to 
check if the agents successfully cover the mission space
(i) even in the presence of constraints
such as obstacles and mobility constraints, and 
(ii) in the absence of the prior information on the density function.
The second objective is to compare its performance with
the learning algorithm DISL, which is chosen to ensure
fair comparisons.
Indeed, the other existing algorithms require either or both of
prior knowledge on density or free motion without constraints.
\item Adaptability to environmental changes:
In many real applications of sensor coverage schemes,
it is required for sensors to change the configuration 
according to the surrounding environment.
In particular, the density function can be time-varying
e.g. in the scenario such as measuring of radiation quantity in the air
and sampling of some chemical material and temperature in the ocean. 
It is expected for payoff-based algorithms to naturally 
adapt to such environmental changes
without altering action selection rules and any 
complicated decision-making processes
due to the characteristics that prior knowledge on
environments except for $\A_i$ is not assumed.
We thus check the function by using a Gaussian density function whose
mean moves as time advances.
\end{itemize}

\subsection{Experimental System}
\label{sec:5.3}

\begin{figure}
\begin{center}
\includegraphics[width=10cm]{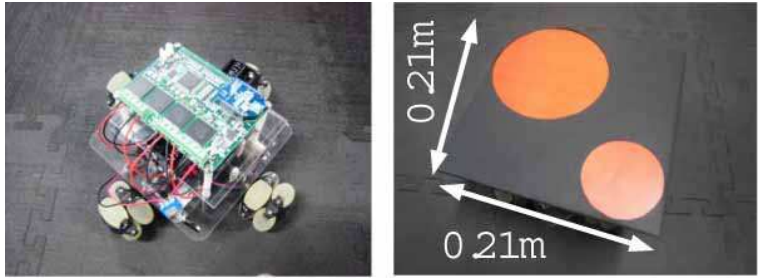}
\caption{Mobile Robot}
\label{fig:3}
\end{center}
\end{figure}

In the experiments, we use four mobile robots with
four wheels which can move in any direction (Fig. \ref{fig:3}).
Fig. \ref{schematic} shows the schematic of the experimental system.
A camera (Firefly MV (ViewPLUS Inc.) with
lenses LTV2Z3314CS-IR (Raymax Inc.)) is mounted over the field.
The image information is sent to a PC and processed
to extract the pose of robots from the image
by the image processing library OpenCV 2.0.
Note that a board with two colored feature points is attached to each 
robot as in Fig. \ref{fig:3} to help the extraction.
According to the extracted poses, the actions to be taken by agents
are computed based on learning algorithms.
However, in the experiments, the selected actions are not executed directly
since collisions among robots must be avoided.
For this purpose, a local decision-making mechanism 
checks whether collisions would occur
if the selected actions were executed.
The mechanism is designed based on heuristics and
we avoid mentioning the details since 
it is not essential.
If the answer of the mechanism is yes, the agents
decide to stay at the current position.
Otherwise, the selected actions are sent 
as reference positions together with 
the current poses to the local
velocity and position PI controller implemented
on a digital signal processor DS-1104
(dSPACE Inc.).
Then, the eventual velocity command is sent to each robot
via a wireless communication device XBee
(Digi International Inc.).

\begin{figure}
\begin{center}
\begin{minipage}{10cm}
\includegraphics[width=10cm]{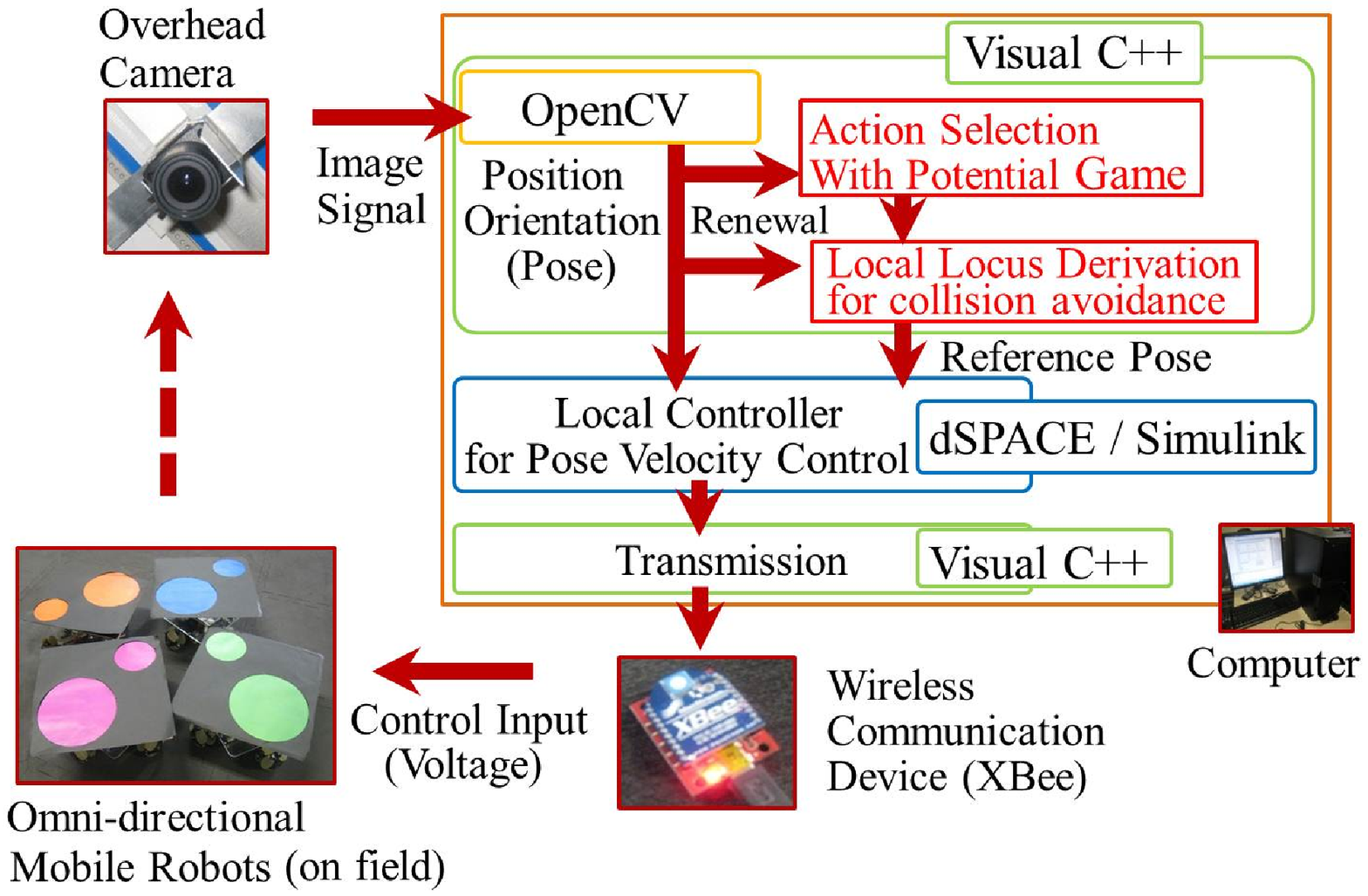}
\caption{Experimental Schematic}
\label{schematic}
\end{minipage}
\begin{minipage}{6cm}
\includegraphics[width=6cm]{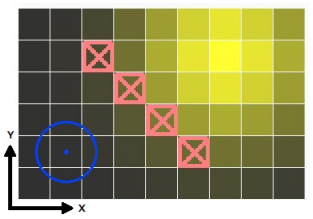}
\caption{Setting of Experiment 1}
\label{field}
\end{minipage}
\end{center}
\end{figure}

The following setup is common in all experiments.
The mission space ${\mathcal Q}^c := [0\ 2.7]{\rm m}\times [0\ 1.8]{\rm 
m}$ 
is divided into $9 \times 6$ squares with side length $0.3$m
as in Fig. \ref{field}
letting the discretized set ${\mathcal Q}$ be given by the
centers of the squares as
\[
 {\mathcal Q} = \{(0.15+0.3j, 0.15+0.3l)|\ j \in \{0, \cdots, 8\},\ l\in \{0, 
\cdots, 5\}\}.
\]
The sensing radius $r_m$ is set as $r_m = 0.3$m for all robots.
We also assume that each agent has a mobility constraint
\[
 \R_i(a_i) = \{a_i \pm 0.3 (b_1,b_2)\in \A_i|\ b_1 \in \{ -1, 0 , 1\},\ 
b_2 \in \{-1, 0, 1\}\}.
\]
The initial actions of agents are set as
\[
 a_1(0) = (0.15,0.15),\ a_2(0) = (0.15, 0.45),\ a_3(0) = (0.45,0.15),\ 
a_4(0) = (0.45, 0.45).
\]

\subsection{Experiment 1}
\label{sec:5.4}

\begin{figure}
\begin{center}
\includegraphics[width=15cm]{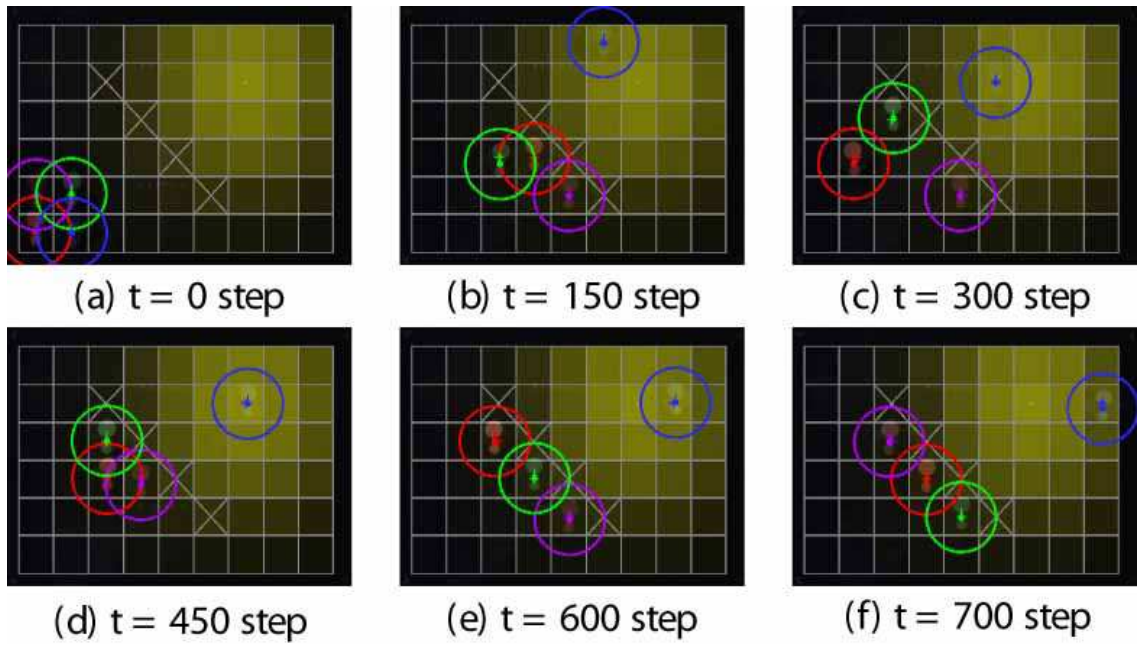}
\caption{Configurations by DISL (Experiment 1)}
\label{obs:disl}
\end{center}
\end{figure}

In the first experiment, we demonstrate the effectiveness of PIPIP.
For this purpose, we employ the density function 
\begin{eqnarray}
 W(q) = e^{-\frac{25\|q - \mu\|^2}{9}},\ \mu = (1.95,1.35)
\nonumber
\end{eqnarray}
and prepare obstacles at
\begin{eqnarray}
{\mathcal O} := \{(0.75, 1.35), (1.05,1.05), (1.35, 0.75), (1.65,0.45)\}.
\end{eqnarray}
Namely, in the experiment, the action sets are given by 
${\A}_i = {\mathcal Q}\setminus {\mathcal O}$.
The setup is illustrated in Fig. \ref{field}, 
where the region with high density is colored by yellow
and the red cross mark indicates the actions prohibited to be taken by the obstacles.
Under the situation, we see that there exist some Nash undesirable 
equilibria just ahead on the left of the obstacles.
It should be also noted that each robot does not know the function $W(q)$
{\it a priori}.

We first run DISL under the above situation 
with the exploration rate $\varepsilon= 0.15$.
Then, the resulting configurations 
at $0$, $150$, $300$, $450$, $600$ and $700$ steps
are shown in Fig. \ref{obs:disl}.
Under the setting, three robots cannot 
reach the colored region at least in $700$ step.
It is now easily confirmed that the configurations at $600$ and $700$[step]
are Nash equilibria only for the three robots and hence they
cannot increase utilities by any one agent's action change.
\begin{figure}
\begin{center}
\includegraphics[width=15cm]{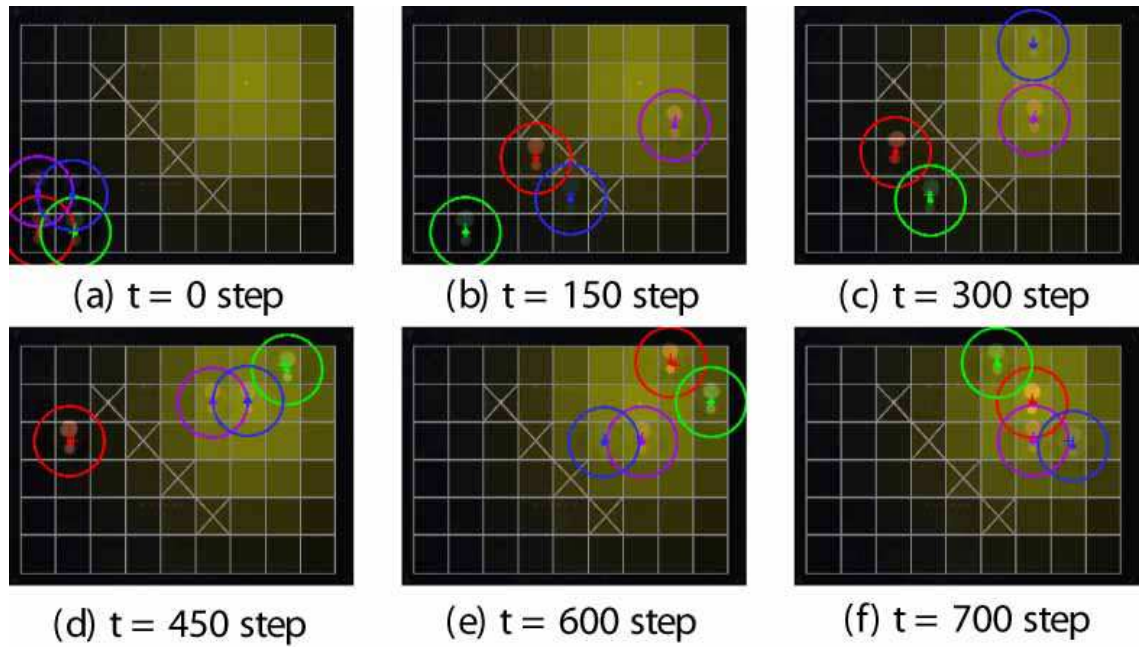}
\caption{Configurations by PIPIP (Experiment 1)}
\label{obs:pipip}
\end{center}
\end{figure}

We next run PIPIP
letting the parameter $\varepsilon$
be fixed as $\varepsilon= 0.15$ and setting $\kappa = 0.5$
(namely, PHPIP is actually run in the experiment).
Fig. \ref{obs:pipip} shows resulting configurations
at the same steps as Fig. \ref{obs:disl}.
Surprisingly, we see that all the robots eventually avoid
the obstacles and arrive at the colored region though
they initially do not know where is important.
Such a behavior is never achieved by conventional 
coverage control schemes.
The time responses of the potential function $\phi$ 
for PIPIP and DISL are illustrated in Fig. \ref{obs:pot},
where the solid line shows the response for PIPIP and
the dashed line for DISL.
As is apparent from the above investigations, PIPIP achieves a higher potential
function value than DISL.

Though we can show only one sample due to the page constraints,
similar results are obtained for both DISL and PIPIP through several trials.
From the results, we claim that PIPIP has a stronger tendency to
escape undesirable Nash equilibria than DISL, which is also confirmed by 
the meaning of the irrational decision.
Of course, the results strongly depend on the value of exploration 
rate $\varepsilon$.
We thus show the time evolution of the function $\phi$
for $\varepsilon= 0.3$ in Fig. \ref{obs:pot1}.
We see from Fig. \ref{obs:pot1} that some agents executing DISL 
also do not reach the important region even for 
$\varepsilon= 0.3$, which seems to be quite high probability
as an exploration rate.
Indeed, the fluctuation of the responses is large and
an agent with PIPIP overcomes the obstacle again 
leaving the colored region.
From all the above results, we thus can state 
that guarantees of only convergence to Nash equilibria
can be a significant problem not only
from the theoretical point of view but also from 
the practical viewpoint.
Though much more thorough comparisons are necessary
in order to make the claim on superiority of PIPIP over DISL confident,
PIPIP achieves a better performance than DISL at least in the setup.

\begin{figure}
\begin{center}
\begin{minipage}{7.8cm}
\includegraphics[width=7.8cm]{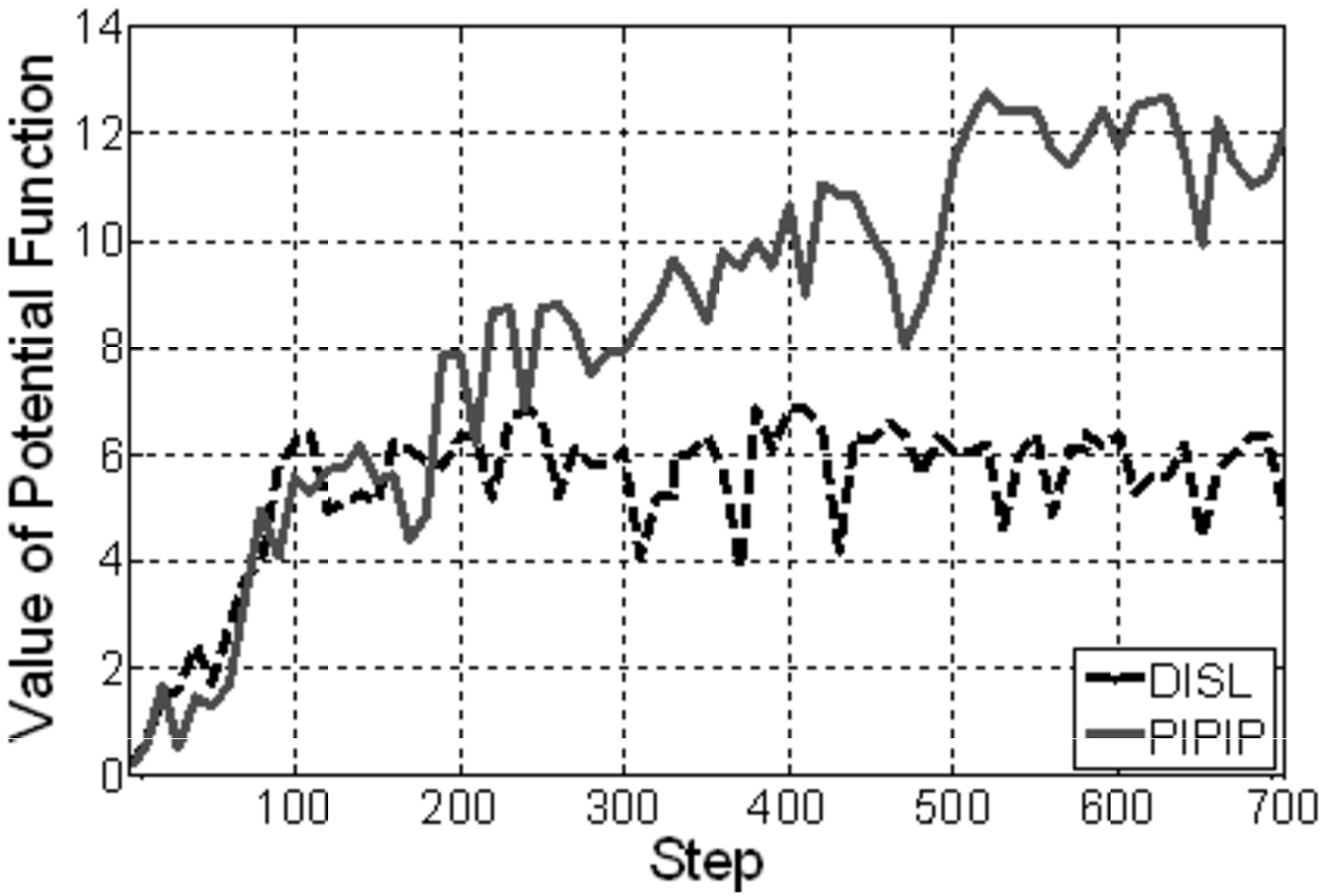}
\caption{Time Evolution of Potential Function for $\varepsilon= 0.15$ (Experiment 1)}
\label{obs:pot}
\end{minipage}
\hspace{.4cm}
\begin{minipage}{7.8cm}
\includegraphics[width=7.8cm]{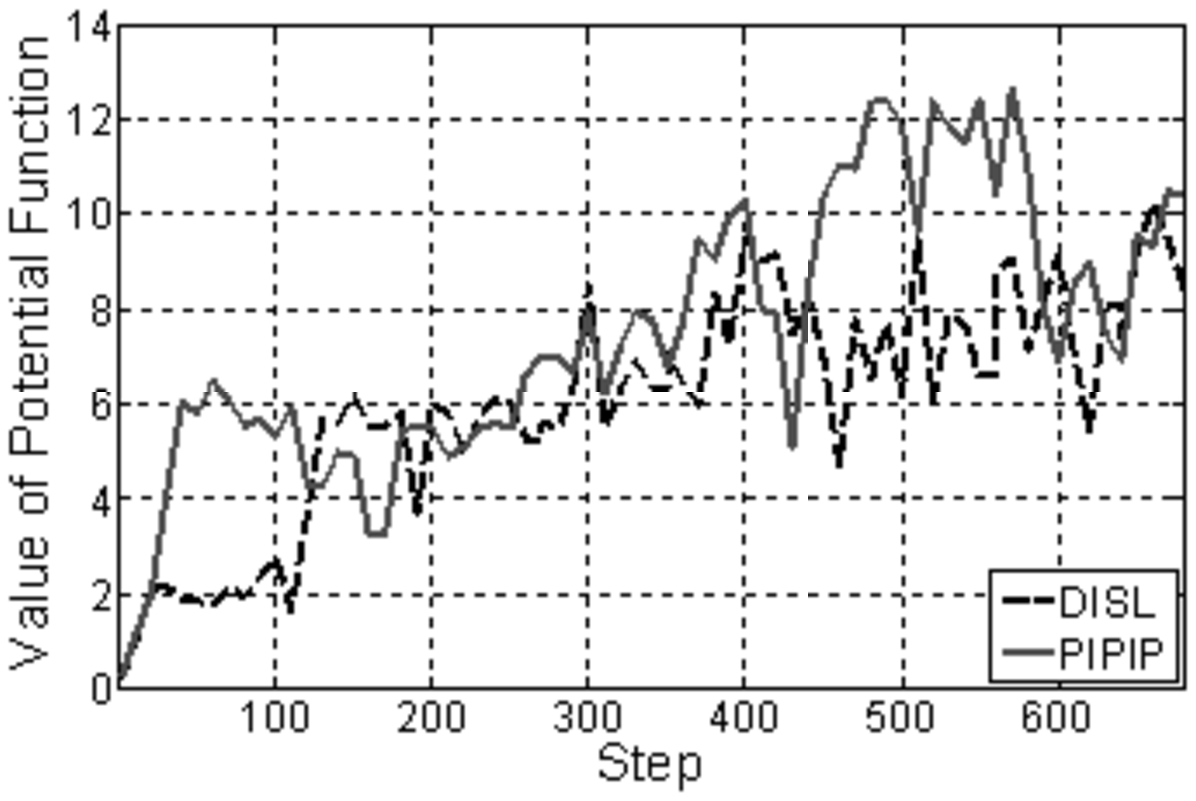}
\caption{Time Evolution of Potential Function for $\varepsilon= 0.3$ (Experiment 1)}
\label{obs:pot1}
\end{minipage}
\end{center}
\end{figure}

\subsection{Experiment 2}
\label{sec:5.5}

We next demonstrate the adaptability of PIPIP
to environmental changes, where
we get rid of the obstacle ${\mathcal O}$
and hence $\A_i = {\mathcal Q}$.
In the experiment, we use the following Gaussian density function
whose mean gradually moves.
\begin{eqnarray}
 W^c(q) = e^{-\frac{25\|q - \mu(t)\|^2}{9}},\ \mu(t) = 
\left\{
\begin{array}{ll}
(0.45,0.45),&\mbox{if }t \in [0, 300]\\
(0.00375t-0.6750,0.00225t-0.225),&\mbox{if } t\in 
 (300,700)\\
(1.95,1.35),&\mbox{if }t\geq 700
\end{array}
\right.
\nonumber
\end{eqnarray}
It is worth noting that agents select actions without using
any prior information on the density.

\begin{figure}
\begin{center}
\includegraphics[width=15cm]{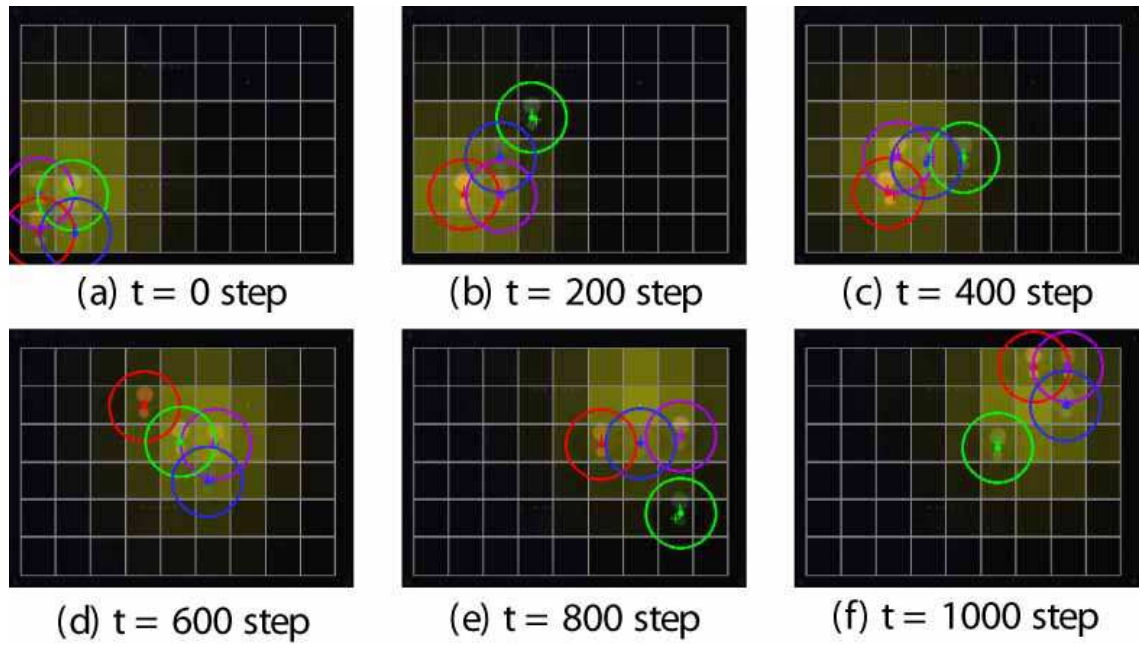}
\caption{Configurations by PIPIP (Experiment 2)}
\label{mov:pipip}
\end{center}
\end{figure}
\begin{figure}
\begin{center}
\includegraphics[width=8cm]{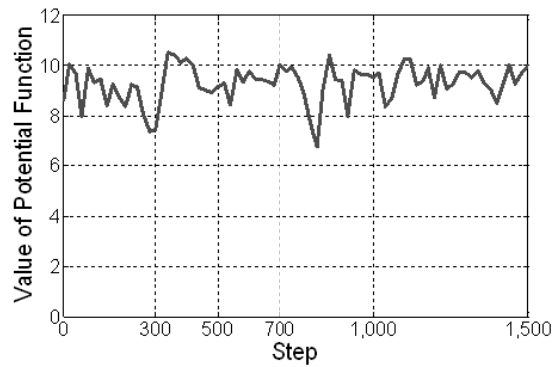}
\caption{Time Evolution of Potential Function (Experiment 2)}
\label{mov:pot}
\end{center}
\end{figure}

Figs \ref{mov:pipip} and \ref{mov:pot} respectively illustrate the
resulting configurations at $0$, $200$, $400$, $600$, $800$
and $1000$ steps and time evolution of the potential function $\phi$.
We see from Fig. \ref{mov:pipip} that agents 
gather at around the most 
important region at any time instant while
learning the environmental changes.
Fig. \ref{mov:pot} also shows that the potential function
keeps almost the same level during whole time, which indicates that
the agents successfully track the most important region.
From these results, as expected, agents executing 
PIPIP successfully adapt to the environmental
changes without changing the action selection rule at all.
Such a behavior is also never achieved by conventional 
coverage control schemes.

\section{Conclusion}
\label{sec:6}

In this paper, we have developed a new learning algorithm Payoff-based
Inhomogeneous Partially Irrational Play (PIPIP) for potential game theoretic
cooperative control of multi-agent systems.
The present algorithm is based on Distributed Inhomogeneous 
Synchronous Learning (DISL) presented in \cite{ZM09a} and inherits
several desirable features of DISL.
However, unlike DISL, PIPIP
allows agents to make irrational decisions, that is,
take an action giving a lower utility from the past two actions.
Thanks to the decision, we have succeeded proving convergence of the joint action
to the potential function maximizers while
escaping from undesirable Nash equilibria.
Then, we have demonstrated the utility of PIPIP through experiments on
a sensor coverage problem. 
It has been revealed through the demonstration that
the present learning algorithm  works even in a finite-time
interval and agents successfully arrive at around the optimal Nash equilibria 
in the presence of obstacles in the mission space.
In addition, we also have seen through an experiment
with a moving density function that PIPIP has adaptability
to environmental changes, which is a function expected for payoff-based
learning algorithms.





\end{document}